\newcommand{\eg}{\emph{e.g.,}\xspace}
\newcommand{\ie}{\emph{i.e.,}\xspace}
\newcommand{\first}{(i)\xspace}
\newcommand{\second}{(ii)\xspace}
\newcommand{\third}{(iii)\xspace}
\newcommand{\etal}{\emph{et al.}\xspace}
\newcommand\update[1]{{#1}}
\theoremstyle{plain}
\newtheorem{theorem}{Theorem}
\newtheoremstyle{reducespacing}
  {3pt}
  {3pt}
  {}
  {0pt}
  {\bfseries}
  {.}
  { }
  {\thmname{#1}\thmnumber{ #2}\textnormal{\thmnote{ (#3)}}}
\theoremstyle{reducespacing}
\newtheorem{definition}{Definition}
\newcommand{\cmark}{\ding{51}}
\newcommand{\xmark}{\ding{55}}
\newcommand{\model}{BEAM\xspace}
\newcommand{\modelfull}{BGP Semantics Aware Network Embedding\xspace}
\newcommand{\modelfullcap}{\textbf{B}GP s\textbf{E}m\textbf{A}ntics aware network e\textbf{M}bedding\xspace}
\begin{document}

\date{}

\title{\Large \bf Learning with Semantics: Towards a Semantics-Aware Routing Anomaly Detection System}


\author{
{\rm Yihao Chen}\\
Tsinghua University
\and
{\rm Qilei Yin}\\
Zhongguancun Lab
\and
{\rm Qi Li}\\
Tsinghua University
\and
{\rm Zhuotao Liu}\\
Tsinghua University
\and
{\rm Ke Xu}\\
Tsinghua University
\and
{\rm Yi Xu}\\
Tsinghua University
\and
{\rm Mingwei Xu}\\
Tsinghua University
\and
{\rm Ziqian Liu}\\
China Telecom
\and
{\rm Jianping Wu}\\
Tsinghua University
} 

\maketitle

\begin{abstract}
BGP is the de facto inter-domain routing protocol to ensure global connectivity of the Internet. However, various reasons, such as deliberate attacks or misconfigurations, could cause BGP routing anomalies. Traditional methods for BGP routing anomaly detection require significant manual investigation of routes by network operators. Although machine learning has been applied to automate the process, prior arts typically impose significant training overhead (such as large-scale data labeling and feature crafting), and only produce uninterpretable results. To address these limitations, this paper presents a routing anomaly detection system centering around a novel network representation learning model named \model. The core design of \model is to accurately learn the unique properties (defined as \emph{routing role}) of each Autonomous System (AS) in the Internet by incorporating BGP semantics. As a result, routing anomaly detection, given \model, is reduced to a matter of discovering unexpected routing role churns upon observing new route announcements. We implement a prototype of our routing anomaly detection system and extensively evaluate its performance. The experimental results, based on 18 real-world RouteViews datasets containing over 11 billion route announcement records, demonstrate that our system can detect all previously-confirmed routing anomalies, while only introducing at most five false alarms every 180 million route announcements.
We also deploy our system at a large ISP to perform real-world detection for one month.  During the course of deployment, our system 
detects 497 true anomalies in the wild with an average of only 1.65 false alarms per day. 
\end{abstract}

\section{Introduction}

The Border Gateway Protocol (BGP) is the de facto inter-domain routing protocol to achieve global connectivity in the Internet. BGP establishes Internet-wide routing paths by exchanging route announcements among the networks operated by different organizations, referred to as Autonomous Systems (ASes). Each route announcement carries AS-level path information for reaching certain prefixes (\ie a block of IP addresses). At its steady state, every AS learns an AS-path to reach every globally routable Internet prefix.
Despite its global adoption, BGP itself has no built-in authentication mechanism.
As a result, a misbehaved AS can announce arbitrary routes in the Internet, either due to deliberate attacks or misconfigurations. These bogus routes form serious threats to routing security, namely BGP hijacking (\ie forcing certain traffic to go through a malicious AS) and BGP route leaks (\ie redirecting traffic over unintended links). Over the past decade, the Internet has witnessed several severe BGP incidents. 
For example, a Swiss company leaked over 70,000 routes in 2019~\cite{european2019} and a British company hijacked more than 31,000 prefixes in 2021~\cite{major2021}. A cryptocurrency platform recently confirmed the loss of \$1.9 million after
a BGP hijacking attack~\cite{klayswap2022}. \update{Although several security extensions have been proposed to counter these threats, \eg BGPsec~\cite{lepinski2017bgpsec}, psBGP~\cite{oorschot2007interdomain} and S-BGP~\cite{kent2000secure}, they are not widely deployed, possibly due to incompatibility with the current Internet architecture. Besides, while RPKI~\cite{mohapatra2013bgp} has gained traction in providing authoritative information about IP prefix ownership, its effectiveness is largely limited by the incomplete deployment of ROV~\cite{chen2022rov}. More importantly, RPKI is not designed to mitigate route manipulation attacks or route leaks. } 

\update{
Detecting routing anomalies in the global Internet is the first step towards secure Internet routing. The community has proposed significant research in this regard}
\cite{zheng2007light,vervier2015mind,sermpezis2018artemis,schlamp2016heap,hu2007accurate,li2012buddyguard,shi2012detecting,zhang2008ispy,li2005internet,yan2009bgpmon}. However, they typically rely on extensive analysis of routing data from multiple sources. More crucially, these methods require non-trivial human supervision to produce reasonable results.
The advance in Machine Learning (ML) motivates the community to apply ML techniques to automate and simplify anomaly detection by recognizing different patterns of route announcements~\cite{cheng2016ms,testart2019profiling,cheng2018multi,dong2021isp,al2015detecting,al2012machine,lutu2014separating,deshpande2009online,theodoridis2013novel,shapira2020deep,hoarau2021suitability,shapira2022ap2vec,sanchez2019comparing}.
However, existing methods require large datasets with manual labels and/or handcrafted features, imposing significant overheads on data collection and model update. Moreover, many of these methods learn deep latent features for classification, producing largely uninterpretable results.
As a consequence, these methods 
provide limited practical guidance for network operators to fix routing anomalies. 

To address these challenges, we present a routing anomaly detection system centering around a novel network representation learning model, \model (\modelfullcap). Instead of 
learning any latent or opaque features, \model enables interpretable and accurate routing anomaly detection based on the intrinsic routing characteristics of ASes that are derived from the \emph{domain specific knowledge of BGP semantics}. Specifically, we propose the concept of \emph{AS routing role} to meaningfully characterize ASes in BGP route announcements. The design of routing role is derived from the AS business relationship
graph (rather than any handcrafted features), because an AS's business relationship with its neighboring ASes determines how the AS chooses to update the route announcements received from neighbors, and how the newly generated route announcements are further propagated~\cite{gao2001inferring}. Given accurate modeling of ASes' routing roles, anomaly detection is reduced to a matter of detecting unexpected AS routing role churns from the original route to the new route announcement.

The key design challenges in obtaining routing roles are two-fold.
First, the available dataset of route announcements could contain non-trivial noises due to the unrevealed routing anomalies in the Internet~\cite{ballani2007study}.
For instance, AS 4134 leaked over 70,000 routes in 2019~\cite{european2019} and AS 55410 hijacked more than 31,000 prefixes in 2021~\cite{major2021}. 
As a result, computing routing roles directly from the noisy route announcement dataset (using either raw announcements or statistical features) could lead to high false positive rates~\cite{testart2019profiling,al2012machine,lutu2014separating,deshpande2009online}. Second, due to the dynamism and scale of Internet routing, AS routing roles are evolving over time. 

To address the above challenges, \model employs a novel embedding mechanism to learn an embedding vector for each AS based on the AS graph constructed from AS relationships. The key of \model's embedding is to preserve an AS's proximity and hierarchy properties that are essential to its routing role. The exact definitions of proximity and hierarchy are given in \S\ref{sec:method:problem-definition}. The embedding vectors are further employed to uniquely represent and interpret the routing roles of ASes, based on which our routing anomaly detection system reports routing anomalies
upon observing abnormal routing role churns. Further, we design our learning mechanism to ensure that the embedding vectors can properly capture routing roles despite the ever-changing Internet routing and topology.

We validate our system on 18 real-world route announcement datasets collected from global vantage points\footnote{A vantage point is a BGP participant (\eg a router) that provides public access to its routing table and/or its received route announcements.}. The entire datasets include over 11 billion route announcement records spanning from 2008 to 2021. The experimental results show that
\model
produces interpretable results regarding AS routing role changes, based on which our system correctly identifies all previously-confirmed routing anomalies, while only incurring at most five false alarms every 180 million route announcements 
\update{(about 1.61 false alarms per day)}. 
We also deploy our system at a large ISP to detect routing anomalies over a one-month period.
We further visualize AS routing roles to achieve interpretable routing behavior analysis.
Our work can serve as a complement to existing BGP security extensions, such as RPKI, to protect against both BGP hijacking and BGP route leaks.

To summarize, our contributions are four-fold: 
\begin{itemize}[itemsep=0em,align=parleft,left=0pt..1em]
    \item We design the first BGP semantics aware network representation learning model, \model, that accurately captures the routing roles of ASes.
    \item We develop an unsupervised routing anomaly detection system based on \model, which achieves real-time detection in an interpretable manner without requiring labeled routing data or feature engineering.
    \item We validate our system by conducting experiments on 18 RouteViews datasets with more than 11 billion route announcement records. Our system can detect all confirmed anomalies with minor false alarms.
    \item We deploy our system at a large ISP to collect real-world detection results for a month. The system detects 497 true anomalies in the wild from over 150 million 
    live route announcements, with only 1.65 daily false alarms on average. 
\end{itemize}

\section{Background}\label{sec:background}

\begin{figure}[t]
    \centering
    \includegraphics[width=\linewidth]{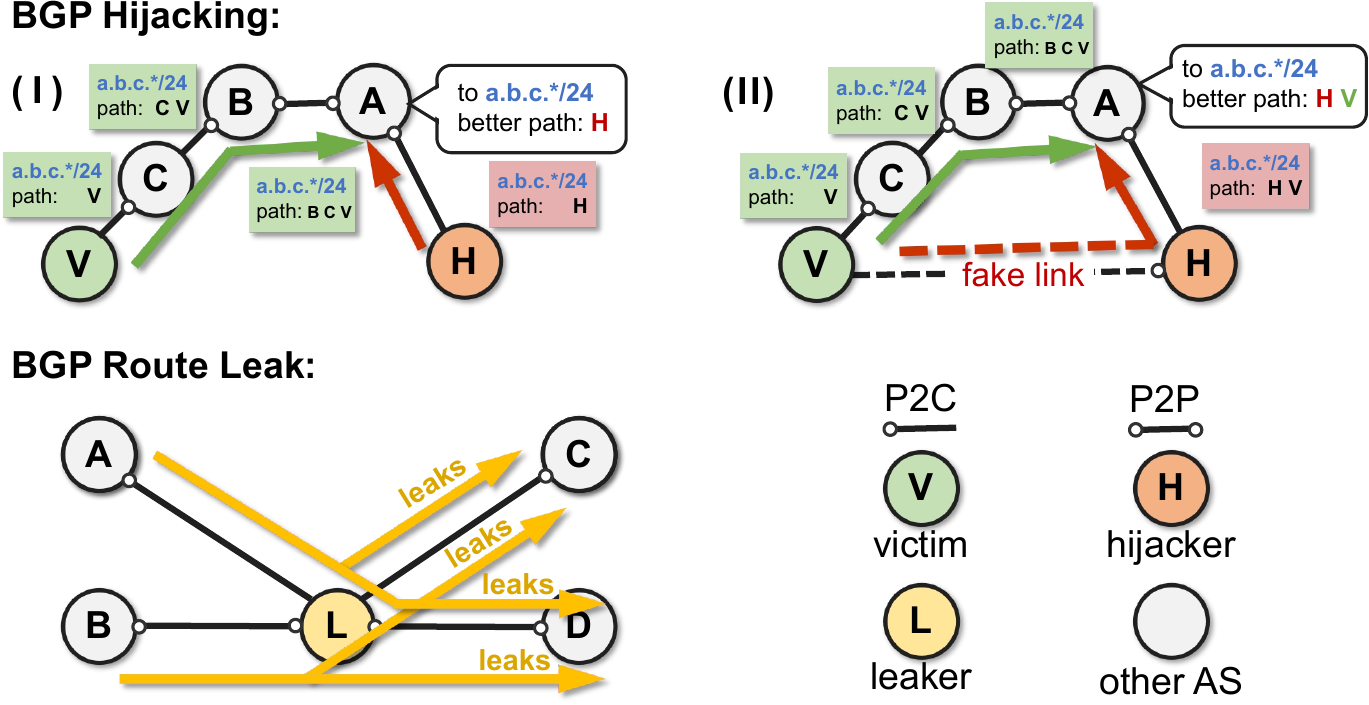}
    \caption{\textbf{Illustrations of BGP anomalies.} \textnormal{In BGP hijacking, the adversary can either (I) falsely claim the ownership of a prefix, or (II) announce a fake yet more preferable route. 
    In BGP route leak, routes are propagated to unintended ASes.
    }}
    \label{fig:bgp-anomaly}
\end{figure}

\noindent\textbf{Inter-domain Routing Protocol.}
The Internet contains over 73,000 advertised Autonomous Systems (ASes) as of Jun 2023.
Each AS consists of several networks under the management of the same organization and is identified by a uniquely allocated non-negative integer called AS Number (ASN). BGP is the de facto inter-AS routing protocol to achieve global connectivity of the Internet. BGP is a path-vector routing protocol that maintains AS-level path information, which gets updated as BGP announcements propagate in the network. Upon receiving a BGP announcement, an AS, following its \emph{routing policy}, may stop further propagating the announcement, or append its ASN to the AS-path and send the updated announcement to a \emph{selective} set of neighbors.

\update{Business relationship
largely determines
one AS's routing policy~\cite{gao2001inferring,prehn2021biased}}. Two neighboring ASes typically have three types of business relationships\footnote{We ignore complex AS relationships~\cite{giotsas2014inferring}. They are much more unusual and play a minor role in defining the overall routing behavior of an AS.}: provider-to-customer (P2C), peer-to-peer (P2P) and customer-to-provider (C2P), where a customer AS pays its provider for connectivity while two peering ASes forward traffic to each other free of charge. Thus, the inter-domain routing system of the Internet can be reconstructed as an AS graph based on AS relationships. This AS-level topology exhibits hierarchy~\cite{gao2001inferring}, with several well recognized Tier-1 (large-scale) ASes. However, the topology is not strictly hierarchical and is flattening over time~\cite{luckie2013relationships}.

\noindent\textbf{BGP Anomalies.}
Although widely deployed, BGP lacks built-in authentication, \ie one AS can broadcast virtually arbitrary BGP announcements to disrupt the security and reliability of Internet routing.
BGP anomalies can be classified into two categories: hijacking and route leak, as illustrated in Fig.~\ref{fig:bgp-anomaly}. BGP hijacking itself has two subcategories: \first falsely claiming the ownership of a prefix or \second announcing fake paths (usually more preferable than real paths) to prefixes. The first type of hijacking is solvable by Route Origin Validation (ROV)~\cite{mohapatra2013bgp}, which is experiencing gradual deployment. Yet, the second type of hijacking usually needs per-hop path validation protocols such as BGPsec~\cite{lepinski2017bgpsec} that has very limited deployment.
Also, BGP hijacking can be classified as \emph{prefix hijacking} (targeting a prefix of others) or \emph{subprefix hijacking} (targeting a subset of others' prefix, \ie subprefix).

The other category of BGP anomalies is route leak: a misbehaved AS propagates BGP announcements to another AS in violation of the intended policies, resulting in traffic forwarded through unintended links.
The Gao-Rexford model~\cite{gao2001inferring} describes the restrictions on BGP route propagation and can be used to identify BGP route leak (\ie the valley-free criterion).
For example, in 2019, AS 21217 (\emph{Safe Host}) broke the valley-free criterion by propagating announcements received from its providers (\eg AS 13237 (\emph{euNetworks GmbH})) to another provider AS 4134 (\emph{China Telecom}), redirecting large amounts of Internet traffic destined for European mobile networks through \emph{China Telecom}~\cite{european2019}. 
\section{Semantics Aware Analysis} 
\label{sec:method}

\begin{figure}[!ht]
    \centering
    \includegraphics[width=\linewidth]{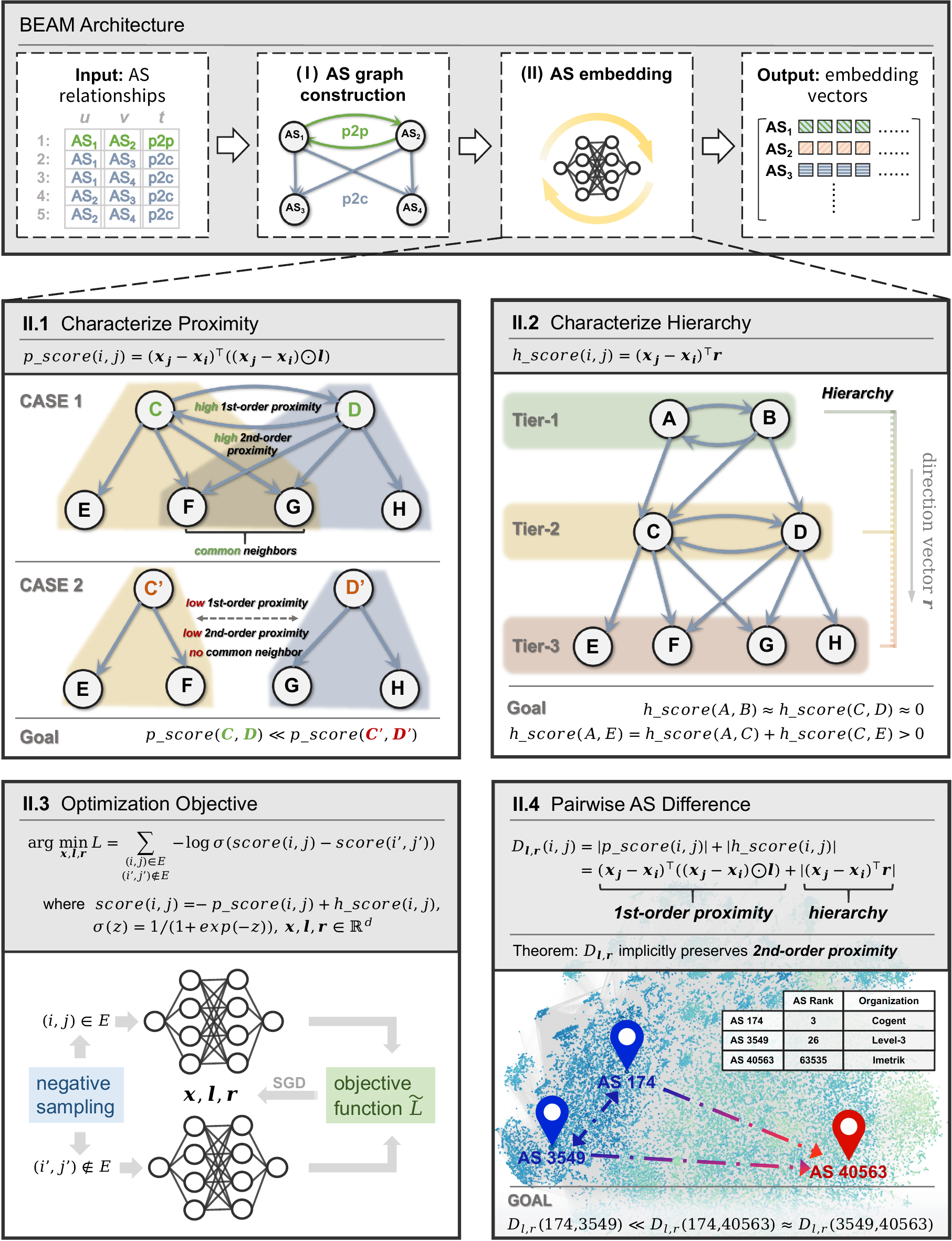}
    \caption{\textbf{Learning AS routing roles via \model.}
    \textnormal{\model takes AS relationships as the input and outputs the embedding vectors that represent AS routing roles.
    (\textbf{II.1})
    \model characterizes the proximity between ASes by $p\_score$. If two ASes are directly connected and have the same business relationships with many common neighbors, their proximity tends to be high, \ie a lower $p\_score$.
    (\textbf{II.2}) \model characterizes the hierarchy among ASes by $h\_score$. If an AS must traverse multiple consecutive P2C links to reach another AS, their hierarchy difference should be large, \ie a higher $h\_score$. 
    (\textbf{II.3}) \model utilizes one joint objective to \update{optimize} both $p\_score$ and $h\_score$;
    negative sampling is also applied.
    (\textbf{II.4}) The function $D_{\bm{l},\bm{r}}$ measures the routing-role difference between two ASes. A higher $D_{\bm{l},\bm{r}}$ value means higher difference. }}
    \label{fig:method}
\end{figure}

\subsection{\model Overview}
\label{sec:method:overview}

We propose a novel network representation learning model, \model, to learn the routing roles of ASes. The routing roles meaningfully characterize the ASes
in BGP route announcements and are utilized to detect Internet routing anomalies. 
As shown in \mbox{Fig.~\ref{fig:method}}, \model takes the AS business relationships as the input and generates the embedding vector for each AS by \first constructing an AS graph, and \second performing AS embedding. To compute AS embedding vectors, \model designs distance functions to measure two routing characteristics of the ASes: the proximity (see \mbox{Fig.~\ref{fig:method}(II.1)}) and the hierarchy (see \mbox{Fig.~\ref{fig:method}(II.2)}). The exact definitions of the two characteristics are given in \S\ref{sec:method:problem-definition}. We design a dedicated optimization objective such that \model preserves both the proximity and the hierarchy through the embedding (see \mbox{Fig.~\ref{fig:method}(II.3)}).
After obtaining the optimized embedding vectors, we can further measure the pairwise AS difference in terms of their routing roles (see \mbox{Fig.~\ref{fig:method}(II.4)}).

To our best knowledge, \model is the first dedicated network representation learning model that fully integrates BGP semantics into the training process and enables meaningful and accurate representation of ASes' routing roles.
\update{Applying network representation learning, rather than using ``raw'' AS business relationships, is essential to characterize ASes' routing roles. In particular, our network representation learning model can capture the global routing characteristics of each AS and translate them into embedding vectors, while the original AS business relationships can only indicate the local routing policy between two directly connected ASes. 
Further, with the embedding vectors, we can quantify the difference in routing roles between \emph{any} pair of ASes, regardless of whether they are connected or not. This enables us to detect the unexpected AS routing role churns in the global Internet that capture routing anomalies.
}

\subsection{Model Formulation}
\label{sec:method:problem-definition}

\begin{definition}[AS Graph]
    An AS graph is a directed graph $G=(V,E)$ where each vertex $v \in V$ represents an AS and each directed edge $e=(u,v) \in E$ represents a P2C relationship from $u$ to $v$.
\end{definition}

We regard each unique AS (identified by its ASN) as a vertex, and the P2C relationship between two vertices as a directed edge from provider to customer. Accordingly, a C2P relationship is viewed as a reversed P2C one, and a P2P relationship is represented via two edges in opposite directions.

We define two types of AS proximity
to represent the similarity between two ASes according to their local connections/relationships with their neighbors. 

\begin{definition}[First-Order AS Proximity]
    The first-order proximity between two ASes is their pairwise connection in the AS graph.
    \update{For a pair of vertices $(u, v)$, the first-order proximity between them is 1 if they are connected by an edge $e = (u,v) \in E$; otherwise it is 0.}
\label{def:first-order-pro}
\end{definition}

\begin{definition}[Second-Order AS Proximity]
     The second-order proximity between two ASes is
     the similarity between their neighborhood network structures. Given vertices $u, v$, let $\bm{p_u}=\langle \bm{w_{u,1}},\dots,\bm{w_{u,|V|}} \rangle$ denote the first-order proximity of AS $u$ with other ASes, \update{the second-order proximity between $u$ and $v$ is quantified based on the consistency between $\bm{p_u}$ and $\bm{p_v}$}. 
\label{def:second-order-pro}
\end{definition}

While the concept of proximity has been proposed before~\cite{tang2015line}, \model is the first to extend its interpretation to BGP semantics.
As illustrated in \mbox{Fig.~\ref{fig:method}(II.1)}, ASes C and D have high first-order proximity due to the P2P relationship (\ie two edges in the AS graph),
and also high second-order proximity since they provide Internet transit services for a similar set of customers. We also confirm our interpretation via a real-world example: AS 8903 and AS 12541, owned by
cloud service provider \emph{Evolutio}, serve as each other's backup and hence have very similar routing roles.
In terms of proximity, they have 25 common customers and their neighbor AS sets are highly similar (with a Jaccard index of 86.2\%); thus, the proximity is consistent with routing role similarity. 

As discussed in \S\ref{sec:background}, the Internet topology exhibits hierarchy.
Typically, a provider AS is considered on a higher level than its customers. Thus, we define AS hierarchy as follows:

\begin{definition}[AS Hierarchy]
    \update{The hierarchy of an AS is its tendency to establish P2C relationship with other ASes.}
    For two vertices $u, v$, if there exists a directed edge $e=(u,v)$, the hierarchy from $u$ to $v$ is positive.
\label{def:hierarchy}
\end{definition}

In real world, AS 7018 (\emph{AT\&T}) lies on the top of the Internet since it establishes either P2C or P2P relationship with other ASes,
while AS 140061 (\emph{China Telecom}) is a stub AS without any customers. Thus, AS 7018 and AS 140061 have very different routing roles in BGP, which is aligned with the positive hierarchy between them.

To quantify the AS proximity and hierarchy, we embed ASes into low-dimensional representations.

\begin{definition}[AS Embedding]
     Given $G=(V,E)$,
     AS embedding is to map each vertex $v \in V$ into a low-dimensional vector space $\mathbb{R}^d$, \ie learn a mapping function $f_{G;\theta}:V \rightarrow \mathbb{R}^d$, where $\theta$ contains learnable parameters and $d \ll |V|$.
\label{def:as-embedding}
\end{definition}

Finally,
we define our \model model as follows:

\begin{definition}[\modelfull]
     Given AS relationships, the \model model constructs the AS graph $G=(V,E)$ and performs AS embedding,
     such that the embedding vectors $\bm{x} = \{ \bm{x_v}|v \in V, \bm{x_v} = f_{G;\theta}(v)\}$ preserve the first- and second-order proximity and the hierarchy of ASes.
\end{definition}

\subsection{AS Graph Construction}
\label{sec:method:as-graph-construction}

The first step of training \model is to construct the AS graph. We use the real-world CAIDA AS relationship dataset~\cite{caida_as_relationship}\footnote{\update{CAIDA is not the only source of AS business relationships. Other sources like TopoScope~\cite{jin2020toposcope} are also available for training our system.
}} to construct the AS graph $G$. A business relationship between two ASes can be denoted as a tuple $(u, v, t)$, where $u$ and $v$ are two ASNs and $t \in \{\textsc{P2P}, \textsc{P2C}, \textsc{C2P}\}$ refers to the relationship type.
For each tuple $(u, v, t)$ in the CAIDA dataset, if $t=\textsc{P2C}$, we add a directed edge $e=(u,v)$ into $E$. If $t=\textsc{C2P}$, we add a directed edge $e=(v,u)$ into $E$. And if $t=\textsc{P2P}$, we add two directed edges $e=(u,v)$ and $e'=(v,u)$ into $E$.

The rationale for using AS business relationship to construct the AS graph is that it primarily determines how an AS chooses to update the routing paths received from neighbors,
and how the new generated route announcements are propagated~\cite{gao2001inferring}. 
Hence, it has direct impacts on ASes' routing roles. Moreover, the AS business relationship is a relatively stable property determined by real-world commercial agreements between connected ASes.
Thus, it is challenging to impersonate a specific AS without simultaneously changing or faking multiple AS relationships.
Besides, unlike historical route announcement data, the AS business relationship dataset contains fewer incorrect entries~\cite{luckie2013relationships}.

\subsection{AS Embedding}
\label{sec:method:as-embedding}

With a constructed AS graph, we embed ASes into a vector space while preserving proximity and hierarchy.
To this end, we design two distance functions to measure the difference between ASes regarding proximity and hierarchy, respectively.

\noindent\textbf{Proximity Distance.}
The proximity distance,
indicated by $p\_score$, models the first-order proximity between ASes: a small distance between two ASes means that their proximity is large. Per Def.~\ref{def:first-order-pro}, given a pair of directly linked vertices $(u, v)$
and another pair of vertices $(u', v')$ without a direct edge, $p\_score$ should
reflect the difference between their first-order proximity, \ie $p\_score(u, v) < p\_score(u', v')$. Therefore, we define $p\_score$ as follows:
\begin{equation}
    p\_score(u, v) = (\bm{x_v}-\bm{x_u})^{\intercal}\big((\bm{x_v}-\bm{x_u}) \odot \bm{l}\big), 
    \label{eq:s1}
\end{equation}
\noindent where $\bm{x_u}, \bm{x_v} \in \mathbb{R}^d$ denote the embedding vectors of $u, v$, respectively. $\bm{l} \in \mathbb{R}^d$ is a learnable weight vector for the $d$ components. \update{$\intercal$ and $\odot$ denote matrix transpose and Hadamard product, respectively.}
\update{Intuitively, $\bm{l}$ projects the embedding vectors into a subspace intended for preserving the proximity.}
To explicitly preserve the first-order proximity,
\model learns $\bm{x}, \bm{l}$ by decreasing the $p\_score$ of two vertices with an edge, while increasing the $p\_score$ of two vertices that are not directly connected.
Since this training strategy also preserves the second-order proximity implicitly (elaborated later in this section),
we do not define a dedicated distance function for the second-order proximity. 

\noindent\textbf{Hierarchy Distance.}
The hierarchy distance, indicated by $h\_score$, quantifies the hierarchical difference between ASes. Per Def.~\ref{def:hierarchy}, given a pair of vertices $(u, v)$ where $(u, v) \in E, (v, u) \notin E$, and another pair of vertices $(u',v')$ without a directed edge, \ie $(u',v') \notin E$, the $h\_score$ should reflect the difference between their hierarchy, \ie $h\_score(u, v) > h\_score(u', v')$. To this end, we design $h\_score$ as follows:
\begin{equation}
    h\_score(u, v) = (\bm{x_v}-\bm{x_u})^{\intercal}\bm{r}, 
    \label{eq:s2}
\end{equation}
\noindent where $\bm{x_u}, \bm{x_v} \in \mathbb{R}^d$ denote the embedding vectors of $u, v$, respectively.
$\bm{r} \in \mathbb{R}^d$ is a learnable unit vector indicating the descending direction of hierarchy.
\update{Intuitively, $\bm{r}$ projects the embedding vectors into a subspace intended for preserving the hierarchy, and thus}
the $h\_score$ calculates the projected length of $\bm{x_v}-\bm{x_u}$ on the specific direction vector $\bm{r}$ such that it has two important properties, \ie $h\_score(u, v) = -h\_score(v, u)$ and $h\_score(u, v) = h\_score(u, w)+h\_score(w, v)$.
To explicitly preserve the hierarchy of ASes,
\model learns $\bm{x}, \bm{r}$ by increasing the $h\_score$ of two vertices with a directed edge (\ie a P2C relationship), while decreasing $h\_score$ of two vertices not directly connected.
Since ASes with P2P relationship is connected by two edges in opposite directions, their $h\_score$ would approach zero under this training strategy, which is consistent with the BGP semantics that two peering ASes are typically on the same hierarchy of the Internet.

\noindent\textbf{Training Objective.}
\label{sec:method:training-objective}
With the two distance functions, we design the training objective of \model to ensure that a trained \model preserves both proximity and hierarchy.
We consolidate the two distance functions as follows:
\begin{equation}
  score(u, v) = -p\_score(u, v) + h\_score(u, v). 
  \label{eq:s}
\end{equation}
\update{Since a small $p\_score$ means large proximity and a large $h\_score$ means large hierarchy, we subtract $p\_score$ in Eq.\eqref{eq:s} so that $score$ increases monotonically with the difference between ASes in terms of proximity and hierarchy. This design allows BEAM to preserve the two routing characteristics better.
Although $score$ may become zero in some cases,
it will not affect \model’s training since the training objective is to enlarge the difference of $score$ between observed edges and nonexistent edges.}
Per Def.~\ref{def:first-order-pro} and~\ref{def:hierarchy}, given an observed edge $(u, v)$ and a nonexistent edge $(u', v')$, our model should assign $score(u, v) > score(u', v')$ with the optimal $\bm{x}$, $\bm{l}$ and $\bm{r}$. 
To this end, we formulate the optimization problem as follows:
\begin{equation}
  \arg \min_{\bm{x},\bm{l},\bm{r}} L
    = \sum_{\substack{(u,v) \in E \\
            (u',v') \notin E}} -\log\sigma(score(u,v)-score(u',v')), 
  \label{eq:objective}
\end{equation}
\noindent where $E$ is the edge set of the AS graph, $\sigma(z)=\frac{1}{1+exp(-z)}$ is the sigmoid function, and $L$ is the objective function to be minimized. 
\update{We solve this problem via a fully connected neural network, which has an embedding layer and two linear layers. The embedding layer generates the embedding vectors ($\bm{x}$) that represent ASes' routing roles and the two linear layers project the embedding vectors into two subspaces that preserve proximity ($\bm{l}$) and hierarchy ($\bm{r}$), respectively. For each edge $(u, v) \in E$, we sample 10 negative (nonexistent) edges $(u', v') \notin E$ and each $((u, v),(u', v'))$ forms one training instance. The neural network generates the embedding vectors of $u$ and $v$, 
computes the loss via Eq.\eqref{eq:objective}, and uses SGD~\cite{robbins1951stochastic} to optimize itself. When the training is complete, the neural network learns the ASes' routing roles. We empirically set $d = 128$ and train the network for 1,000 epochs. The batch size is 1024 and the initial learning rate is $10^{-5}$.
}



\noindent\textbf{Computing Pairwise AS Difference.}
\label{sec:method:as-difference-measurement}
The pairwise AS difference represents their difference in the routing roles,
which we define
between the embedding vectors of two ASes using the \model model (including its parameters $\bm{x}$, $\bm{l}$, and $\bm{r}$):
\begin{equation}
\begin{split}
D_{\bm{l},\bm{r}}(u, v) &= |p\_score(u, v)|+|h\_score(u, v)| \\
        = &\underbrace{(\bm{x_v}-\bm{x_u})^{\intercal}((\bm{x_v}-\bm{x_u})\odot \bm{l})}_\text{\emph{the first-order proximity}}+\underbrace{|(\bm{x_v}-\bm{x_u})^{\intercal}\bm{r}|}_\text{\emph{the hierarchy}}.
\end{split}
\label{eq:distance}
\end{equation}

\noindent Note that this pairwise AS difference directly reflects
the first-order proximity and the hierarchy between ASes. 
Moreover, per the definition of the second-order proximity (Def.~\ref{def:second-order-pro}), the pairwise AS difference between two vertices should be small if their neighbors and the business relationships with their neighbors are similar.
In Appendix~\ref{sec:appendix:proof-of-the-second-order-proximity-preservation}, we prove a theorem that this pairwise AS difference does preserve the 
second-order proximity between ASes.
Thus, our \model model can preserve the first-order proximity, the second-order proximity and the hierarchy between ASes.

\subsection{Embedding Results Analysis}
\label{sec:method:embedding-results}

We train \model with the CAIDA AS relationship dataset collected on 06/01/2018 to study the routing roles of ASes. The dataset contains 61,549 ASes and 439,981 business relationships, and is randomly selected from the CAIDA datasets collected before historical BGP incidents (see \S\ref{sec:measurement:analysis-results}).


\begin{figure*}[ht]
    \centering
    \includegraphics[width=0.8\linewidth] {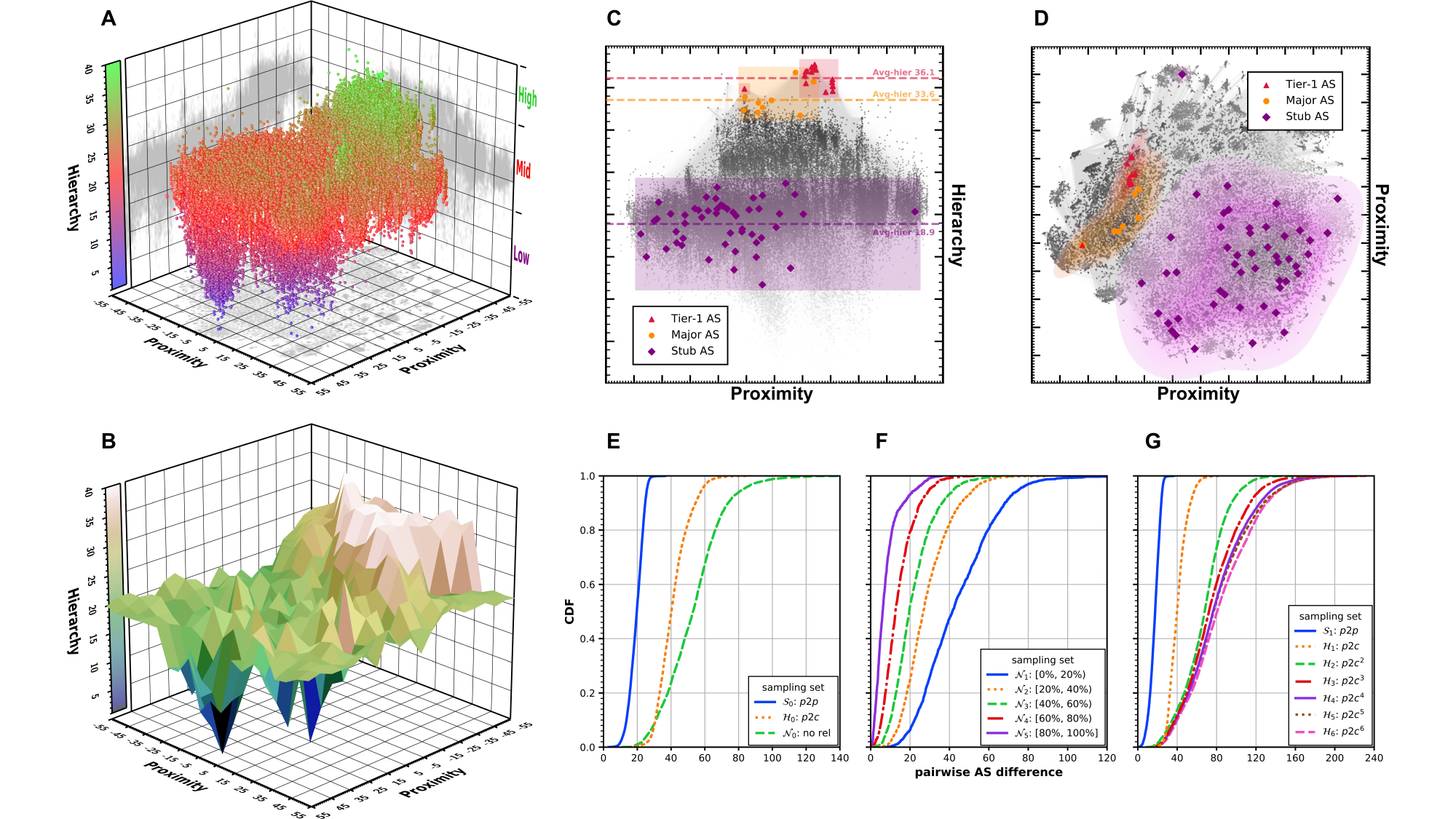}
    \caption{\label{fig:emb-result}\textbf{Embedding results.}
    \textnormal{(\textbf{A}) All embedding vectors visualized in a 3-D space.
    The Z-axis represents the hierarchy level and the XY-plane reflects the proximity.
    (\textbf{B}) The terrain plot of the
    embedding vectors. The estimated spatial distribution shows the overall characteristics of
    AS routing roles.
    (\textbf{C}) The YZ-plane projection of the
    embedding vectors. The X-axis and the Y-axis of the projection plane indicate the proximity and the hierarchy, respectively. The areas with different colors illustrate the distribution of typical ASes (\ie vertices with different colors), and the dashed lines show their average hierarchy levels.
    (\textbf{D}) The XY-plane projection of the
    embedding vectors.
    (\textbf{E}) The routing role difference regarding the sampled subset $S_0$, $H_0$ and $N_0$.
    (\textbf{F}) The routing role difference regarding the sampled subset $N_1$ to $N_5$. 
    (\textbf{G}) The routing role difference regarding
    $S_1$ and $H_1$ to $H_6$.
    }}
\end{figure*}

\noindent\textbf{Visualizing Embedding Vectors.}
We visualize the embedding vectors computed by \model in a 3-D space to illustrate the overall characteristics of AS routing roles.
Since \model learns
the unit direction vector $\bm{r}$ that represents the descending direction of hierarchy, we decompose each embedding vector into two parts: the projection on $\bm{r}$,
and the projection on the plane orthogonal to $\bm{r}$ (\ie the rejection).
We use the length of the projection as the coordinate value of the Z-axis to represent the hierarchy level of each AS.
We further transform the rejection into a 2-D space by the widely used dimension reduction method t-SNE~\cite{van2008visualizing}, and hence obtain the coordinate values of the X-axis and the Y-axis.

We visualize all embedding vectors in \mbox{Fig.~\ref{fig:emb-result}(A)}, where each vertex represents a unique
AS and the color
indicates the AS hierarchy level. 
We observe
that a few vertices are densely located (\ie high proximity) on the highest and the lowest levels of the Internet, while more medium-level vertices are sparsely distributed (\ie low proximity). 
For a better illustration,
we show the terrain plot of the same 3-D space in \mbox{Fig.~\ref{fig:emb-result}(B)} via
IDW~\cite{lu2008adaptive},
an interpolation metric widely used for
spatial distribution estimation.
The observations are consistent with the BGP fact that a small number of ASes on the highest hierarchy level have similar routing roles, since they all provide transit services for other ASes and establish P2P relationships with the ASes on the same level to form ``Internet backbone''. On the contrary, it takes multiple C2P links for another small set of ASes to reach the backbone ASes. Thus, they lie on the lowest hierarchy level, forming several clusters with lower Z-axis coordinate values. 
The rest ASes are located on the medium hierarchy. They have different providers and customers, and serve diverse routing roles. 

\noindent\textbf{Routing Role Analysis.}
We further study the routing roles of
specific ASes to 
check if the computed embedding vectors reflect their actual routing properties in the Internet.
In particular, we choose 16 Tier-1 ASes (\eg \emph{AT\&T}), 9 major yet not Tier-1 ASes (\eg \emph{Telstra}), and 50 other 
random
stub ASes (\ie the ASes connected to only one other AS). We present the projections of their routing roles on both YZ-plane (\mbox{Fig.~\ref{fig:emb-result}(C)}) and XY-plane (\mbox{Fig.~\ref{fig:emb-result}(D)}). The results
show
that the Tier-1 ASes form dense clusters on the highest hierarchy level, the major yet not Tier-1 ASes form several dense clusters on the levels relatively lower than those of Tier-1 ASes, and the
stub ASes are dispersed on the much lower levels. 
This confirms that the computed embedding vectors preserve both the proximity and hierarchy of ASes.
\mbox{Figure~\ref{fig:emb-result}(C)}, however, does not exhibit 
distinct hierarchies, which is expected since
the Internet topology is not strictly hierarchical~\cite{luckie2013relationships};
for example, AS 4134 (\emph{China Telecom}) has P2P relationship with the Tier-1 AS 1299 (\emph{Arelion}), yet it also keeps a 
route with 2 consecutive C2P links to reach AS 1299.

\noindent\textbf{Pairwise AS Difference Comparison.}
We measure the difference of routing roles between two ASes. 
We choose AS pairs in three categories: \first We randomly sample a subset $S_0$ from all pairs of ASes with the P2P relationship while enforcing the following constraint. In particular, given an AS pair, we obtain the neighbor AS set of each AS in the pair. The AS pair is eligible only if the Jaccard index (\ie the similarity) of the two neighbor AS sets is between 0\% and 10\%. We also sample another subset $S_1$ without enforcing this constraint. 
\second We randomly sample six subsets (denoted as $N_0$ to $N_5$) from all pairs of ASes that are not connected directly, where the Jaccard indices of the two neighbor AS sets are between 0\% and 10\%, 0\% and 20\%, 20\% and 40\%, 40\% and 60\%, 60\% and 80\%, and 80\% and 100\%, respectively.
\third We randomly sample seven subsets (denoted as $H_0$ to $H_6$), where the shortest path between two ASes in one pair is a direct P2P link, and 2, 3, 4, 5, and 6 consecutive P2C links, respectively.
For instance, if two AS pairs (\eg AS1 and AS2, AS2 and AS3) are both in the P2C relationship, then AS1 and AS3 form two consecutive P2C links. 
See details in Appendix~\ref{sec:appendix:as-pair-datasets}.

For each sampled subset, we compute the difference of routing roles between two ASes 
and obtain Cumulative Distribution Function (CDF) curves. We make three types of comparisons among the CDF curves of different subsets.
The comparisons of the P2P AS pairs ($S_0$), P2C AS pairs ($H_0$) and no-relationship AS pairs ($N_0$) show that $S_0$ has the smallest difference and $N_0$ exhibits the largest difference (see \mbox{Fig.~\ref{fig:emb-result}(E)}).
It reveals that \model preserves the first-order proximity. The comparisons among the subsets with different degrees of neighbor intersection ($N_1$ to $N_5$) show that if two ASes have more common neighbors (\ie high second-order proximity), they tend to have similar routing roles (see \mbox{Fig.~\ref{fig:emb-result}(F)}). In  \mbox{Fig.~\ref{fig:emb-result}(G)}, the difference of routing roles increases with the number of consecutive P2C links between two ASes, which is consistent with the nature of AS hierarchy.

\section{The Anomaly Detection System}
\label{sec:detection}

In this section, we develop a semantics aware routing anomaly detection system built upon \model\footnote{\update{Our system is open source at  \href{https://github.com/anonymized-for-reviews/routing-anomaly-detection}{this GitHub repository}}.}.

\begin{figure*}[ht]
    \centering
    \includegraphics[width=0.9\linewidth]{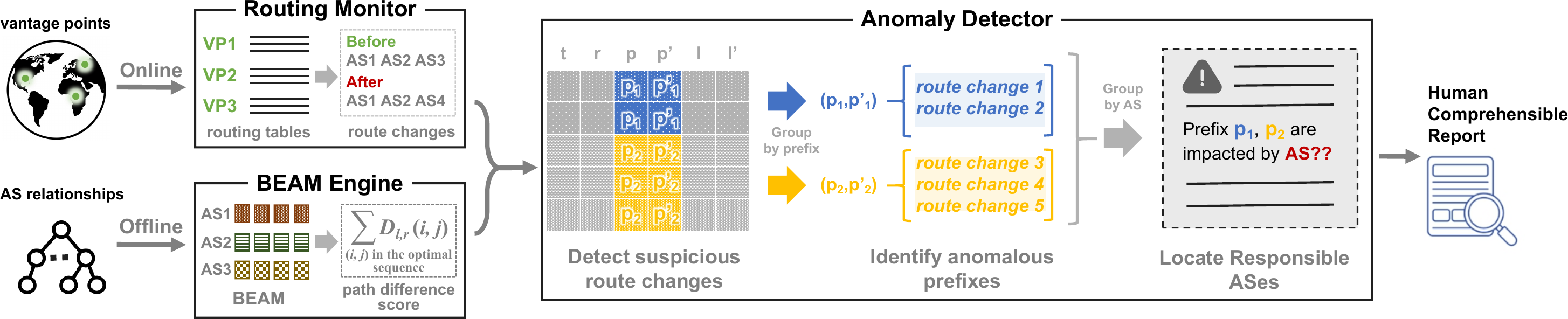}
    \caption{\textbf{The workflow of our routing anomaly detection system built upon \model.}}
    \label{fig:detection-system}
\end{figure*}

\subsection{System Overview}
\label{sec:detection:overview}

Our routing anomaly detection system is designed to detect global-scale Internet routing anomalies.
As shown in Fig.~\ref{fig:detection-system}, our detection system consists of three components: the routing monitor, the \model engine, and the anomaly detector. The routing monitor establishes connections with global vantage points to capture route changes in real time. The \model engine utilizes our \model model to compute the \emph{path difference scores} of the route changes, which quantifies the routing role difference between the new and the original routing paths. Based on the path difference scores, the anomaly detector identifies  suspicious route changes, and groups the suspicious route changes sharing the same prefix into anomalous prefix events. This is necessary to identify the prefixes impacted by widespread routing anomalies. 
It further locates the ASes responsible for each anomalous prefix event and correlates the events caused by the same set of responsible ASes. 
Since the \model model used in our detection system is pretrained with AS relationship data (instead of labeled routing anomaly data), our detection is unsupervised.

\subsection{BEAM Engine}
\label{sec:measurement:path-differnce-score}



Not all route changes are caused by routing anomalies. For instance, different ASes of the same organization may claim the ownership of a specific prefix simultaneously, thus generating two routing paths with different origin ASes. This
is called multiple origin AS (MOAS)~\cite{zhao2001analysis} and should not be considered as anomalous. Thus, the BEAM engine relies on \emph{path difference score} to identify suspicious route changes.  


To compute the path difference score for a route change, we design a method based on the dynamic time warping (DTW) algorithm~\cite{berndt1994using}, an effective way of measuring the overall difference between two ordered sequences of unequal length. 
Specifically, given two routing paths $S=\langle v_1,\dots, v_m \rangle$ and $S'=\langle v'_1,\dots, v'_n \rangle$, the DTW algorithm repeatedly selects a pair of ASes from $S$ and $S'$, and generates 
eligible sequences of AS pairs by satisfying the following conditions: \first Each AS in $S$ ($S'$) should be paired with one or more ASes from $S'$ ($S$). \second The first (last) AS in $S$ should be paired with the first (last) AS in $S'$. \third Given $i<j$, if $v_i$ and $v_j$ from $S$ are paired with $v'_k$ and $v'_l$ in $S'$, then there keeps $k \le l$; S and S' are symmetric. 
\update{For example, if $S=\langle v_1,v_2,v_3 \rangle$, $S'=\langle v'_1,v'_2,v'_3 \rangle$, then $((v_1,v'_1),(v_2,v'_2),(v_3,v'_3))$ and $((v_1,v'_1),(v_1,v'_2),(v_2,v'_2),(v_3,v'_3))$ are both eligible, but $((v_1,v'_1),(v_1,v'_2),(v_2,v'_1),(v_3,v'_3))$ is ineligible because the two pairs $(v_1,v'_2)$ and $(v_2,v'_1)$ violate the condition \third.}
For each eligible sequence, we sum the pairwise AS difference (given by \model) of its AS pairs. Then, we choose the minimum value from all summed values as the path difference score. The sequence of AS pairs with the minimum value is referred to as \emph{the optimal sequence}. Let $S[1:i]$ ($S'[1:j]$) denote the first $i$ ($j$) ASes in $S$ ($S'$). We apply dynamic programming to obtain the optimal sequence of AS pairs for $S[1:i]$ and $S'[1:j]$ based on prior states as follows:
\begin{itemize}[itemsep=0em,align=parleft,left=0pt..1em]
    \item Given the optimal sequence for $S[1:i]$ and $S'[1:j-1]$, add a new AS pair $(S[i], S'[j])$ to the sequence.
    \item Given the optimal sequence for $S[1:i-1]$ and $S'[1:j]$, add a new AS pair $(S[i], S'[j])$ to the sequence.
    \item Given the optimal sequence for $S[1:i-1]$ and $S'[1:j-1]$, add a new AS pair $(S[i], S'[j])$ to the sequence.
\end{itemize}

\noindent
We choose the operation yielding the minimal sum of pairwise AS difference as the optimal AS pair sequence for $S[1:i]$ and $S'[1:j]$.
The induction base, \ie the optimal AS pair sequence for $S[1:1]$ and $S'[1:1]$, is trivial to compute. The pseudo-code of this algorithm is presented in Appendix~\ref{sec:appendix:detection-system}.

\subsection{Anomaly Detector}
\label{sec:detection:procedure}


\noindent\textbf{Detecting Suspicious Route Changes.}
The anomaly detector first identifies suspicious route changes caused by routing anomalies. Specifically, for a route change, 
the anomaly detector checks whether its path difference score is greater than a threshold $th_d$, which is dynamically computed using historical legitimate route changes (detailed in \S\ref{sec:detection:results}). If so, the route change is regarded as \emph{suspicious}.

\noindent\textbf{Identifying Anomalous Prefixes.}
It is necessary to prioritize the widespread routing anomalies captured by multiple vantage points. Towards this end, the anomaly detector groups the suspicious route changes that impact the same prefix into different \emph{prefix events}, where each event is associated with a specific prefix and sorts the suspicious route changes by their occurrence time. For each event, the anomaly detector applies a sliding window to count the number of individual vantage points that observe suspicious route changes within the window. If this number is above a threshold $th_v$ (detailed in \S\ref{sec:detection:results}), we consider the prefix event is associated with a widespread routing anomaly and regard the event as anomalous.


\noindent\textbf{Locating Responsible ASes.}
The misbehaved ASes that are responsible for routing anomalies 
may impact multiple prefixes simultaneously. 
To obtain comprehensive information about the affected prefixes in each routing anomaly, the anomaly detector correlates all anomalous prefix events based on their responsible ASes. In particular, for each suspicious route change associated with an anomalous prefix event, our detector identifies the ASes that either appear in the new path or in the original path. Then, we compute the intersection of these ASes from all route changes as the responsible ASes for the anomalous prefix event. Given two prefix events, if their time ranges have overlaps and they have common responsible ASes, we consider they are correlated. Thereby, we can divide all anomalous prefix events into multiple sets, where each event only correlates with the other events in the same set (\ie no cross-set correlation). Finally, our anomaly detector treats each set as an individual routing anomaly and outputs an  alarm that specifies both the affected prefixes and the responsible ASes. The additional details of our detection system is supplemented in Appendix~\ref{sec:appendix:detection-system}.

\section{Experimental Results} 
\label{sec:measurement:analysis-results}

In this section, we evaluate the path difference scores and perform experiments with real-world BGP data. 
We also deploy our system at a large ISP to verify its effectiveness in practice.

\subsection{Measuring Path Difference Score} 
\label{subsec:path_diff_score_results}

\begin{figure*}[t]
    \centering
    \includegraphics[width=0.9\linewidth]{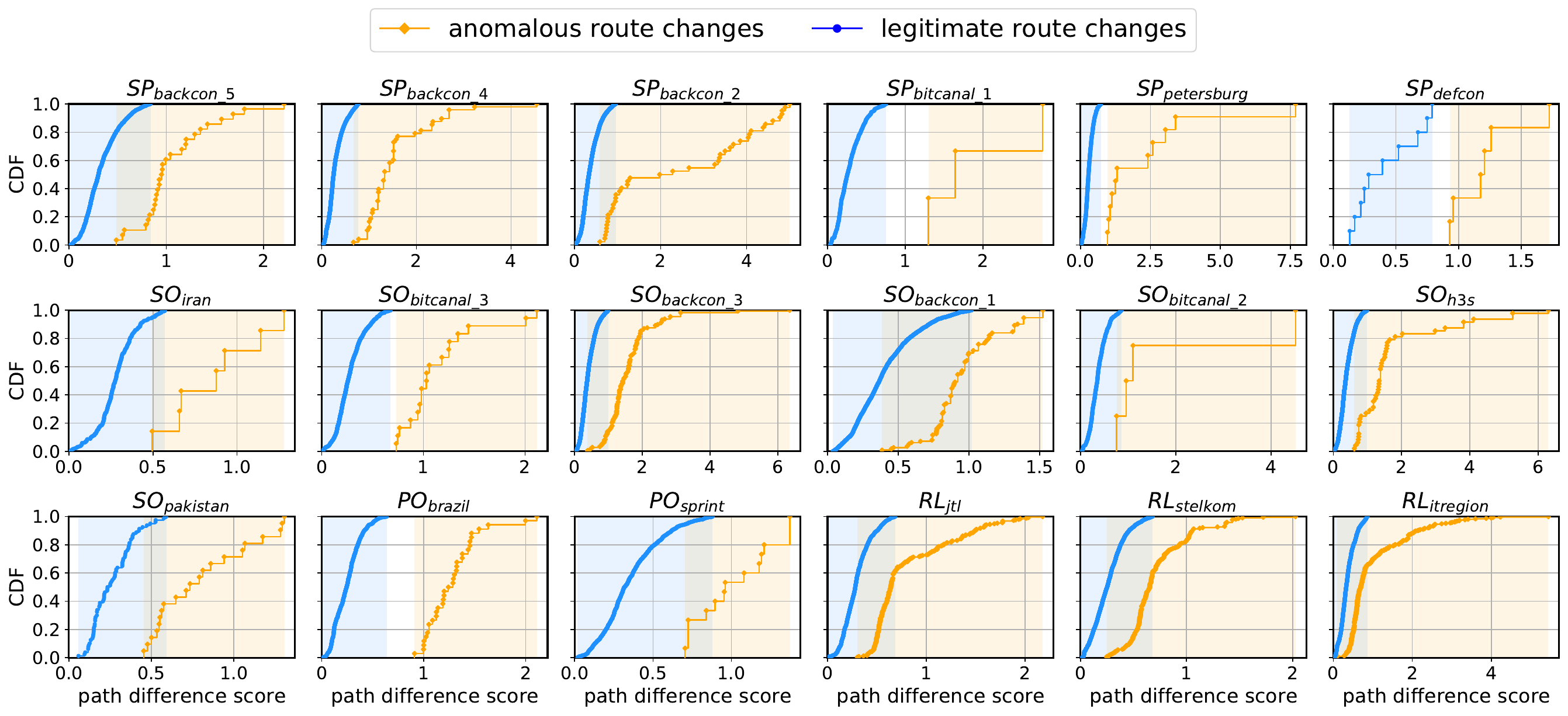}
    \caption{\textbf{Statistical comparisons of path difference scores between anomalous and legitimate route changes.}
    }
    \label{fig:mrc-brc}
\end{figure*}

We use real-world route announcements to analyze both legitimate and anomalous route changes in terms of path difference scores. 
We collect 18 reports on historical routing anomalies spanning from 2008 to 2021, including 15 BGP hijacking (2 prefix and 13 subprefix hijacking) and 3 BGP route leak incidents. 
For each anomaly, we obtain manually confirmed information (\eg the time of anomalies and the affected prefixes) from two authoritative sources, the Oracle blogs~\cite{oracle_blogs} and the BGPStream monitor~\cite{bgpstream}. Based on the information, we fetch all route announcements 12 hours before and after the anomalies from RouteViews~\cite{routeviews} and obtain 18 datasets. The total number of collected route announcements is 11,861,377,951. 
In each dataset, we identify the anomalous route changes
with
the confirmed information of the anomaly. The anomalous route changes in our datasets cover both \emph{origin change} (\ie two routing paths have different origin ASes) and \emph{path change} (\ie two routing paths share the same origin yet traverse different ASes).
We also locate legitimate route changes such as origin changes incurred by the multiple origin AS issue~\cite{zhao2001analysis}.
The details are in Appendix~\ref{sec:appendix:route-datasets} and~\ref{sec:appendix:legitimate-route-change}.

\mbox{Figure~\ref{fig:mrc-brc}} shows the path difference scores measured by the \model engine on the 18 datasets. The path difference scores of anomalous route changes are much higher than those of legitimate ones, indicating that routing anomalies would significantly change the routing roles of ASes on the routing paths. 
For instance, the anomalous route changes in $SO_{pakistan}$ changed the origin AS from AS 36561 (\emph{YouTube}) to AS 17557 (\emph{Pakistan Telecom}).
These two origin ASes are significantly different in their neighbors, geographic locations and hierarchical levels in the Internet.
In contrast, the different origin ASes in a legitimate route change often belong to the same organization and their neighbors are more similar (\eg the same upstream AS), resulting in much smaller routing role churn between the original and new routes. 

\subsection{Routing Anomaly Detection Results}
\label{sec:detection:results}

\begin{table*}[ht]
    \caption{\textbf{Detection results on the 18 real-world datasets.} \textnormal{The \cmark/\xmark\xspace indicates whether the confirmed anomaly of a dataset is \emph{detected}. If \emph{detected}, the number of all raised alarms (\#Alarms) and that of false alarms (\#FalseAlarms) are presented in bold.}}
    \label{tab:detection-result}
    \small
    \centering
    \begin{tabu}{X[2.5,l]|X[0.8,c]|X[0.8,c]|X[0.8,c]|X[0.8,c]|X[0.8,c]|X[0.8,c]|X[0.8,c]|X[0.8,c]|X[0.8,c]|X[1.2,c]|X[1.2,c]|X[1.2,c]|X[1.2,c]|X[1.2,c]|X[1.2,c]|X[1.2,c]|X[1.2,c]|X[1.2,c]}
    \hline
    \multirow{2}{*}{\textbf{Dataset}}&\multicolumn{9}{c|}{\textbf{Detected}}&\multicolumn{9}{c}{\textbf{\#Alarms(\#FalseAlarms)}}\\
\cline{2-19}    &{\footnotesize \textbf{ED}}&{\footnotesize \textbf{JI}}&{\footnotesize \textbf{Li}}&{\footnotesize \textbf{Ma}}&{\footnotesize \textbf{NV}}&{\footnotesize \textbf{SD}}&{\footnotesize \textbf{LS}}&{\footnotesize \textbf{AV}}&{\footnotesize \textbf{Ours}}&{\footnotesize \textbf{ED}}&{\footnotesize \textbf{JI}}&{\footnotesize \textbf{Li}}&{\footnotesize \textbf{Ma}}&{\footnotesize \textbf{NV}}&{\footnotesize \textbf{SD}}&{\footnotesize \textbf{LS}}&{\footnotesize \textbf{AV}}&{\footnotesize \textbf{Ours}} \\
    \hline
    {\footnotesize ${SP}_{backcon\_5}$}&\cmark&\cmark&\cmark&\cmark&\cmark&\xmark&\xmark&\xmark&\cmark&{\scriptsize \textbf{18(5)}}&{\scriptsize \textbf{12(3)}}&{\scriptsize \textbf{21(4)}}&{\scriptsize \textbf{15(3)}}&{\scriptsize \textbf{17(2)}}&{\scriptsize 9(1)}&{\scriptsize 62(31)}&{\scriptsize 5(1)}&{\scriptsize \textbf{34(2)}}\\
    {\footnotesize ${SP}_{backcon\_4}$}&\cmark&\cmark&\cmark&\cmark&\cmark&\xmark&\cmark&\cmark&\cmark&{\scriptsize \textbf{14(1)}}&{\scriptsize \textbf{8(1)}}&{\scriptsize \textbf{15(2)}}&{\scriptsize \textbf{13(1)}}&{\scriptsize \textbf{12(1)}}&{\scriptsize 7(1)}&{\scriptsize \textbf{42(17)}}&{\scriptsize \textbf{22(6)}}&{\scriptsize \textbf{21(0)}}\\
    {\footnotesize ${SP}_{backcon\_2}$}&\cmark&\cmark&\cmark&\cmark&\cmark&\xmark&\cmark&\xmark&\cmark&{\scriptsize \textbf{29(7)}}&{\scriptsize \textbf{21(7)}}&{\scriptsize \textbf{24(5)}}&{\scriptsize \textbf{25(7)}}&{\scriptsize \textbf{21(4)}}&{\scriptsize 8(3)}&{\scriptsize \textbf{38(13)}}&{\scriptsize 23(18)}&{\scriptsize \textbf{37(1)}}\\
    {\footnotesize ${SP}_{bitcanal\_1}$}&\xmark&\xmark&\xmark&\cmark&\cmark&\xmark&\xmark&\xmark&\cmark&{\scriptsize 16(0)}&{\scriptsize 16(0)}&{\scriptsize 14(0)}&{\scriptsize \textbf{18(0)}}&{\scriptsize \textbf{17(0)}}&{\scriptsize 7(0)}&{\scriptsize 67(36)}&{\scriptsize 30(10)}&{\scriptsize \textbf{16(0)}}\\
    {\footnotesize ${SP}_{petersburg}$}&\cmark&\cmark&\cmark&\cmark&\cmark&\xmark&\cmark&\cmark&\cmark&{\scriptsize \textbf{22(3)}}&{\scriptsize \textbf{14(0)}}&{\scriptsize \textbf{21(3)}}&{\scriptsize \textbf{20(1)}}&{\scriptsize \textbf{16(0)}}&{\scriptsize 12(2)}&{\scriptsize \textbf{66(28)}}&{\scriptsize \textbf{37(16)}}&{\scriptsize \textbf{24(0)}}\\
    {\footnotesize ${SP}_{defcon}$}&\cmark&\cmark&\cmark&\cmark&\cmark&\xmark&\cmark&\cmark&\cmark&{\scriptsize \textbf{7(2)}}&{\scriptsize \textbf{7(2)}}&{\scriptsize \textbf{9(3)}}&{\scriptsize \textbf{9(3)}}&{\scriptsize \textbf{9(3)}}&{\scriptsize 2(2)}&{\scriptsize \textbf{28(10)}}&{\scriptsize \textbf{17(9)}}&{\scriptsize \textbf{7(1)}}\\
    {\footnotesize ${SO}_{iran}$}&\cmark&\cmark&\cmark&\cmark&\cmark&\xmark&\xmark&\xmark&\cmark&{\scriptsize \textbf{15(1)}}&{\scriptsize \textbf{8(1)}}&{\scriptsize \textbf{24(5)}}&{\scriptsize \textbf{12(2)}}&{\scriptsize \textbf{16(3)}}&{\scriptsize 0(0)}&{\scriptsize 21(11)}&{\scriptsize 19(10)}&{\scriptsize \textbf{31(2)}}\\
    {\footnotesize ${SO}_{bitcanal\_3}$}&\xmark&\xmark&\xmark&\xmark&\cmark&\xmark&\xmark&\xmark&\cmark&{\scriptsize 26(4)}&{\scriptsize 24(3)}&{\scriptsize 29(6)}&{\scriptsize 25(3)}&{\scriptsize \textbf{26(5)}}&{\scriptsize 7(0)}&{\scriptsize 44(19)}&{\scriptsize 17(8)}&{\scriptsize \textbf{40(1)}}\\
    {\footnotesize ${SO}_{backcon\_3}$}&\cmark&\cmark&\cmark&\cmark&\cmark&\xmark&\xmark&\xmark&\cmark&{\scriptsize \textbf{32(8)}}&{\scriptsize \textbf{23(4)}}&{\scriptsize \textbf{27(6)}}&{\scriptsize \textbf{34(8)}}&{\scriptsize \textbf{34(9)}}&{\scriptsize 6(1)}&{\scriptsize 49(27)}&{\scriptsize 19(9)}&{\scriptsize \textbf{35(5)}}\\
    {\footnotesize ${SO}_{backcon\_1}$}&\cmark&\xmark&\cmark&\cmark&\cmark&\xmark&\xmark&\xmark&\cmark&{\scriptsize \textbf{19(6)}}&{\scriptsize 16(4)}&{\scriptsize \textbf{35(14)}}&{\scriptsize \textbf{18(4)}}&{\scriptsize \textbf{17(7)}}&{\scriptsize 0(0)}&{\scriptsize 63(35)}&{\scriptsize 25(11)}&{\scriptsize \textbf{18(3)}}\\
    {\footnotesize ${SO}_{bitcanal\_2}$}&\xmark&\cmark&\cmark&\cmark&\cmark&\xmark&\xmark&\xmark&\cmark&{\scriptsize 16(1)}&{\scriptsize \textbf{15(1)}}&{\scriptsize \textbf{17(2)}}&{\scriptsize \textbf{15(1)}}&{\scriptsize \textbf{16(1)}}&{\scriptsize 12(2)}&{\scriptsize 39(14)}&{\scriptsize 29(8)}&{\scriptsize \textbf{24(0)}}\\
    {\footnotesize ${SO}_{h3s}$}&\cmark&\xmark&\cmark&\cmark&\cmark&\xmark&\cmark&\xmark&\cmark&{\scriptsize \textbf{11(1)}}&{\scriptsize 3(0)}&{\scriptsize \textbf{15(3)}}&{\scriptsize \textbf{12(2)}}&{\scriptsize \textbf{9(0)}}&{\scriptsize 5(1)}&{\scriptsize \textbf{38(22)}}&{\scriptsize 27(8)}&{\scriptsize \textbf{14(0)}}\\
    {\footnotesize ${SO}_{pakistan}$}&\cmark&\cmark&\cmark&\cmark&\cmark&\xmark&\xmark&\xmark&\cmark&{\scriptsize \textbf{12(4)}}&{\scriptsize \textbf{8(2)}}&{\scriptsize \textbf{9(2)}}&{\scriptsize \textbf{10(4)}}&{\scriptsize \textbf{8(1)}}&{\scriptsize 1(0)}&{\scriptsize 26(14)}&{\scriptsize 2(0)}&{\scriptsize \textbf{10(1)}}\\
    {\footnotesize ${PO}_{brazil}$}&\cmark&\cmark&\cmark&\cmark&\cmark&\xmark&\xmark&\cmark&\cmark&{\scriptsize \textbf{30(5)}}&{\scriptsize \textbf{32(4)}}&{\scriptsize \textbf{30(5)}}&{\scriptsize \textbf{37(5)}}&{\scriptsize \textbf{25(3)}}&{\scriptsize 11(2)}&{\scriptsize 52(25)}&{\scriptsize \textbf{28(11)}}&{\scriptsize \textbf{51(1)}}\\
    {\footnotesize ${PO}_{sprint}$}&\xmark&\xmark&\xmark&\xmark&\xmark&\xmark&\cmark&\xmark&\cmark&{\scriptsize 20(0)}&{\scriptsize 18(0)}&{\scriptsize 16(0)}&{\scriptsize 22(2)}&{\scriptsize 19(2)}&{\scriptsize 10(2)}&{\scriptsize \textbf{84(24)}}&{\scriptsize 33(8)}&{\scriptsize \textbf{29(0)}}\\
    {\footnotesize ${RL}_{jtl}$}&\xmark&\xmark&\xmark&\xmark&\xmark&\xmark&\cmark&\xmark&\cmark&{\scriptsize 17(1)}&{\scriptsize 16(1)}&{\scriptsize 21(2)}&{\scriptsize 16(2)}&{\scriptsize 17(1)}&{\scriptsize 6(3)}&{\scriptsize \textbf{60(40)}}&{\scriptsize 21(11)}&{\scriptsize \textbf{46(5)}}\\
    {\footnotesize ${RL}_{stelkom}$}&\xmark&\cmark&\xmark&\xmark&\xmark&\xmark&\cmark&\xmark&\cmark&{\scriptsize 25(4)}&{\scriptsize \textbf{21(2)}}&{\scriptsize 34(6)}&{\scriptsize 26(3)}&{\scriptsize 23(3)}&{\scriptsize 4(0)}&{\tiny \textbf{284(225)}}&{\scriptsize 17(8)}&{\scriptsize \textbf{43(3)}}\\
    {\footnotesize ${RL}_{itregion}$}&\xmark&\xmark&\xmark&\xmark&\xmark&\xmark&\cmark&\xmark&\cmark&{\scriptsize 25(0)}&{\scriptsize 21(1)}&{\scriptsize 23(0)}&{\scriptsize 20(0)}&{\scriptsize 21(3)}&{\scriptsize 6(2)}&{\scriptsize \textbf{74(44)}}&{\scriptsize 23(9)}&{\scriptsize \textbf{44(4)}}\\
    \hline
    \textbf{Overall}&{\scriptsize 11/18}&{\scriptsize 11/18}&{\scriptsize 12/18}&{\scriptsize 13/18}&{\scriptsize 14/18}&{\scriptsize 0/18}&{\scriptsize 9/18}&{\scriptsize 4/18}&{\scriptsize 18/18}&{\scriptsize 354(53)}&{\scriptsize 283(36)}&{\scriptsize 384(68)}&{\scriptsize 347(51)}&{\scriptsize 323(48)}&{\scriptsize 113(22)}&{\tiny 1137(635)}&{\tiny 394(161)}&{\scriptsize 524(29)}\\
    \hline
    \end{tabu}
\end{table*}

We now validate the performance of our detection system. 
Since
RouteViews
archives the RIB data (\ie the snapshot of routing table) of global vantage points bi-hourly, for each dataset,
we fetch the most recent RIB data before the confirmed anomaly 
to initialize the
routing tables that our detection system monitors. Then we use the route announcements observed within two hours as input, which 
include all routes associated with the confirmed anomaly and enough legitimate route changes. We set the sliding window length to two hours for the same reason. 
The thresholds $th_d$ and $th_v$ are decided by historical data. Specifically, we use the observed path difference scores for all legitimate route changes two hours before the current window
as a reference distribution, and set $th_d$ as the knee point of its CDF curve.
$th_v$ is determined similarly. \update{The calculation of the knee point is automated via kneed~\cite{kneed}.}

To systematically evaluate our system, we substitute the \model engine with other commonly used features or representation learning models to create variants of our detection system.
In particular, we create 6 variants: 
ED uses the edit distance~\cite{ristad1998learning} to measure path difference; JI uses the Jaccard index of neighbor AS sets to measure the similarity of two ASes; 
Li, Ma, NV and SD uses the general network representation model Line~\cite{tang2015line}, Marine~\cite{feng2019marine}, node2vec~\cite{grover2016node2vec} and SDNE~\cite{wang2016structural} to train embedding vectors, respectively, and utilize Euclidean distance of embedding vectors to measure the similarity of two ASes. All these variants are well-trained and use the same settings as our detection system. Moreover, we compare our system with two state-of-the-art ML-based BGP anomaly detection approaches: AV~\cite{shapira2022ap2vec} and LS~\cite{dong2021isp}. AV and LS identify anomalous route changes and the time intervals where anomalies occur, respectively. For fair comparisons, we apply our anomaly detector to aggregate their detection results into different alarms (see Appendix~\ref{sec:appendix:comparison}).
\update{We do not compare our work with the active probing-based systems like~\cite{vervier2015mind,shi2012detecting} because they heavily rely on the real-time probing results collected by many data-plane facilities. These real-time probing results are not available for our historical datasets.}


Each detection system may raise multiple alarms for a dataset. Each alarm reports a potential routing anomaly. If any alarm matches the confirmed information, \ie the target prefix is reported as one of the anomalous prefixes and the misbehaved ASes are also identified as responsible, we consider the confirmed anomaly in this dataset as \emph{detected}.
Besides the confirmed anomaly, there could be other alarms that indicate unrevealed routing anomalies or are simply false alarms.
It is difficult to contact the operators to further confirm these potential anomalies since most anomalies occurred long 
ago.
\update{To verify these unknown alarms, we define four anomalous route change patterns that represent typical routing anomalies based on domain knowledge and authorized data such as RPKI validation states. If an alarm matches at least one pattern, we consider it a true alarm with high confidence, otherwise it is a \emph{false alarm}. These patterns are as follows:
\begin{itemize}[itemsep=0em,align=parleft,left=0pt..1em]
    \item P1 (Unauthorized Route Change): The origin ASes before and after the route change belong to different organizations and have different RPKI validation states, \ie one in the \emph{invalid\_ASN} state and the other in the \emph{valid} state~\cite{mohapatra2013bgp}.
    \item P2 (Route Leak): The routing path before or after the route change violates the valley-free criterion~\cite{gao2001inferring}.
    \item P3 (Path Manipulation): The routing path before or after the route change contains reserved ASNs or adjacent ASes that have no business relationship records between them~\cite{sermpezis2018artemis}.
    \item P4 (ROA Misconfiguration): The origin ASes before and after the 
    change are from the same organization but have different RPKI validation states, \ie one in the \emph{invalid\_length} or \emph{invalid\_ASN} state and the other in the \emph{valid} state~\cite{gilad2016we}.
\end{itemize}

\noindent
These patterns are endorsed by the domain experts from a large ISP where we deployed our system. They also apply these patterns to verify our real-world detection results (see \S\ref{sec:measurement:real-world-deployment}). 
Note that these patterns alone \emph{should not} be used to detect routing anomalies directly because they cannot correlate massive route changes with the same root cause, which would result in too many false alarms. We discuss the rationale behind these patterns in Appendix~\ref{sec:appendix:classify-detected-routing-anomaly-events}.
}


Table~\ref{tab:detection-result} shows that all previously-confirmed 18 routing anomalies are correctly detected by our system within tens of alarms. Further, our system reports no false alarms for 6 \update{datasets that cover the confirmed anomalies}, and only 5 false alarms in the worst case.
\update{These false alarms are mainly related to route engineering practices such as AS prepending~\cite{chang2005inbound}, while some involve stub ASes with limited connections.}
In contrast, the baselines 
cannot detect all these 
anomalies and raise more false alarms than ours (except for SD that cannot detect any anomalies).  
Besides, the baselines require many extra data to train the models, \eg AV needs RIB entries in every two hours and LS requires a large amount of training data to eliminate the negative impacts of label noises. Moreover, AV cannot detect transient anomalies and LS incurs high FP due to per-minute anomaly detection.
Please see the details in Appendix~\ref{sec:appendix:comparison}.
Overall, our detection system outperforms these baselines 
by significant margins. 
In summary, \first our system addresses the key challenges of ML-based detection methods and realize effective 
Internet routing anomaly detection. \second compared with our \model model, general network representation learning models are not able to effectively capture BGP semantics for routing anomaly detection. 

\subsection{Runtime Overhead}
\update{Our system runs on a Linux server with Intel Xeon E5-2650v4 (2.20GHz).}
We present its runtime overhead in Fig.~\ref{fig:overhead}. The X-axis displays the datasets in their chronological order\update{, with the later datasets containing more ASes in operation. The number of ASes in our datasets increases from 27,588 to 73,014 over the entire period.} The top figure shows the average time it takes to process every 15 minutes of data from each dataset. The error bar reflects the 95\% confidence interval. The variance in processing time is mostly caused by the variance in the sizes of 15-minute data. \update{The processing time is less than 100 seconds in most cases, indicating that the increase of ASes only slightly impacts our system's runtime overhead.}
The largest processing time ($\sim$140s in $RL_{jtl}$) is still much smaller than 15 minutes, meaning that our system can effectively process the stream of global route announcements in real time. The bottom figure plots the average time it takes to process every 1,000 route changes. Due to the caching employed by our system (\eg the caching of path difference scores), the anomaly detector spends much less time at the steady state ($\sim$0.05s per 1,000 route changes), compared with the cold start ($\sim$0.2s).

\begin{figure}[t]
  \centering
    \includegraphics[width=0.9\linewidth]{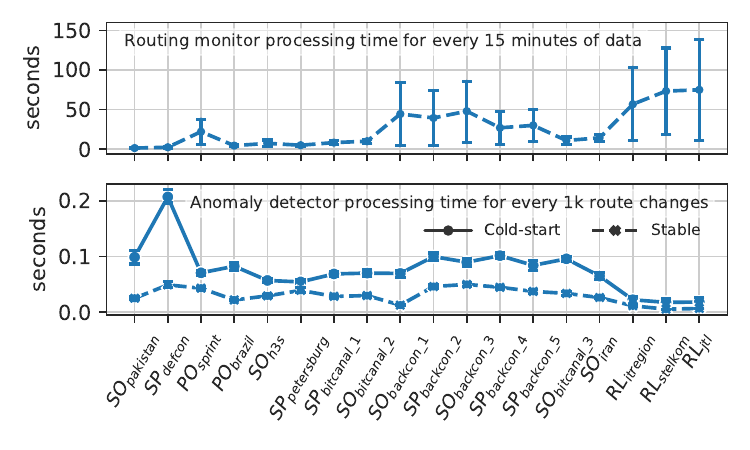}
  \caption{\textbf{The runtime overhead of our detection system.}}
  \label{fig:overhead}
\end{figure}

\subsection{Robustness Analysis}

\update{We analyze the robustness of our system given noisy AS relationship data, which is created by modifying or deleting the original AS relationships in the CAIDA dataset. We consider four types of noisy datasets. In particular, given a noise ratio $r$, we first randomly select $r\%$ AS relationships, and then flip their labels (\ie changing P2P to P2C and P2C to P2P) to produce the R1 dataset, or delete them 
to create the R2 dataset. R1 and R2 represent the noisy dataset caused by inaccurate and incomplete AS relationship inference, respectively. To create another two types of noisy datasets, we first select top-$r\%$ AS relationships with the fewest BGP routes that use their underlying AS-to-AS links and create the W1 dataset by flipping their labels and W2 by deleting them. These two types of noisy data are common in the Internet because the AS relationships serving fewer routes are more likely to be mislabeled due to their limited Internet visibility.

We train our system with each noisy dataset independently and evaluate its detection performance following the same steps in \S\ref{sec:detection:results}. We repeat each experiment for 5 times to avoid bias and plot the results in Fig.~\ref{fig:robustness}. The error bar shows the 95\% confidence interval. Even when the noise ratio reaches 20\%, our system still detects at least 17 true anomalies (18 in total) and only generates less than 40 false alarms across all datasets. Note that such a high noise ratio is rare in practice. These results demonstrate the robustness of our system under noisy AS relationship data.
}

\begin{figure}[t]
  \centering
    \includegraphics[width=0.9\linewidth]{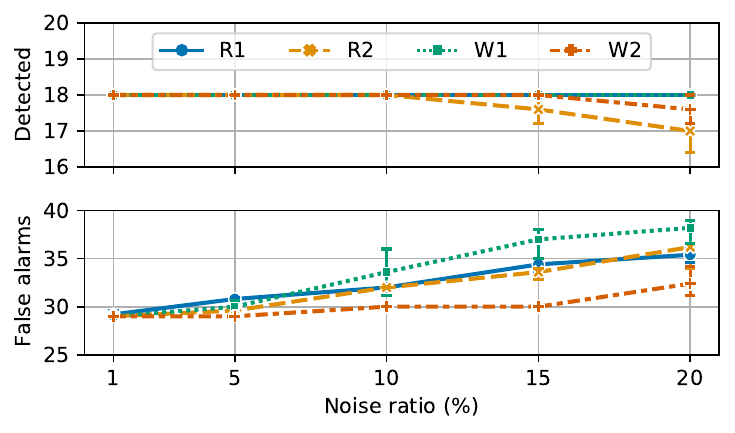}
  \caption{\textbf{\update{Detection results
  given noisy AS relationships.
  }}}
  \label{fig:robustness}
\end{figure}

\subsection{Real-World Deployment}
\label{sec:measurement:real-world-deployment}
We deploy our detection system in the main operational AS of a large ISP to evaluate its performance in practice. Based on the customer cone size, the AS rank is in the global top 100~\cite{caida_as_rank}. At the time of our deployment, the AS maintains live BGP sessions with about 500 neighbors, including 14 Tier-1 ASes. Thereby, the AS has a fairly comprehensive view on the Internet-wide routing paths. 
We set up our detection system in a server that receives real-time incoming route announcements from all BGP routers within the AS via the iBGP protocol. We train the \model engine of our system using the latest CAIDA AS relationship dataset collected at the time of our deployment, which includes 74,923 ASes and 505,927 AS business relationship records. The sliding window length of our system is one hour. We use the same parameter settings and the method of identifying false alarms as described in \S\ref{sec:detection:results}. To reduce the repetitive alarms raised in different time windows, we aggregate the alarms sharing the same anomalous prefix and responsible AS.


The system is online since Jaunary 1, 2023. 
We analyze the generated results from January 1 to February 1, 2023. In total, the system processes 152,493,303 live route announcements during this month, detects 5,106,442 route changes and raises 548 alarms. We show the alarms' impact and daily statistics in Table~\ref{tab:anomaly-impact} and Fig.~\ref{fig:daily-number}, respectively. On average, our system identifies 17.68 alarms per day.
\update{The domain experts of the ISP carefully verify the correctness of each alarm based on the patterns described in \S\ref{sec:detection:results}}.
\update{They find that most alarms (497 out of 548) indicate real routing anomalies, \ie true alarms. These true alarms, not detected by the ISP’s existing routing security mechanisms,  include 84 unauthorized route changes (P1), 123 route leaks (P2), 270 path manipulations (P3) and 20 ROA misconfiguration (P4).
}
The interpretability of these alarms greatly facilitates the identification of anomaly sources (see \S\ref{sec:measurement:case-study} for detailed case study). More importantly, our system only raises an average of 1.65 false alarms per day.
These false alarms can be further eliminated with minimal intervention (see discussions in \S\ref{sec:discussion}).
Overall, our detection system demonstrates promising results in real-world deployment. 

\update{
We further investigate \emph{how many anomalies are caught by an existing security mechanism but not our system}. First, we check whether our system misses the anomalies detected by the ISP's existing security mechanism. The ISP detects invalid routes based on its customers' IRRs. However, due to the incomplete coverage of prefixes, the ISP's security mechanism does not generate any alarm during our system's deployment. Thus, our system does not miss any alarm raised by the ISP itself. Next, we check whether there are real routing anomalies missed by both our system and the ISP itself. This requires a reliable source for BGP incidents. Although BGPstream is a promising candidate, we cannot use it in our paper because its vantage points are quite different from those of the ISP, which would introduce non-negligible experimental bias. Therefore, we use the RPKI validation results as a reference. Specifically, among all the RPKI-invalid announcements that are generated during our system's deployment, our system only misses about 2.25\% of them and most of the missed ones are due to the limited Internet visibility, \ie only a few vantage points observe these invalid announcements. Overall, this result indicates our system has very low false negatives.}

\noindent \textbf{Ethics.} We operate our detection system in compliance with the ISP's policy/agreement and under the close supervision of ISP's administrators. Our evaluation does not involve any sensitive or privacy data. We only collect results for analysis and do not interfere with Internet routing operations. 

\begin{table}[t]
    \caption{\textbf{The overall impact of the detected anomalies.}}
    \label{tab:anomaly-impact}
    \small
    \centering
    \begin{tabu}{X[1,c] X[1,c] X[1,c]}
    \toprule
    \textbf{\#Affected Routes}&\textbf{\#Affected Prefixes}&\textbf{\#Affected Origins} \\
    \midrule
    1,202&961&477 \\
    \bottomrule
    \end{tabu}
\end{table}

\begin{figure}[t]
  \centering
    \includegraphics[width=0.9\linewidth]{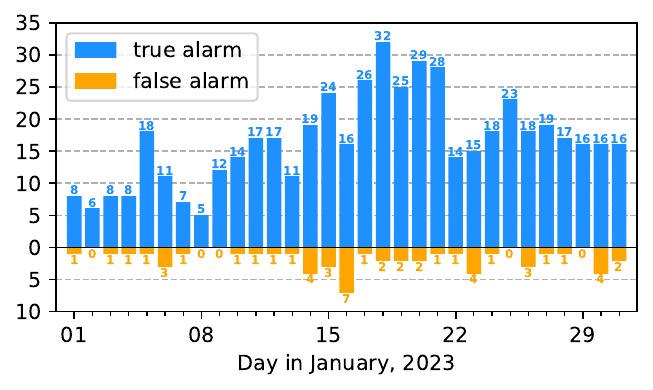}
  \caption{\textbf{Daily alarm number in real-world deployment.}}
  \label{fig:daily-number}
\end{figure}
\section{Case Study}
\label{sec:measurement:case-study}

\begin{figure*}[!ht]
  \centering
    \includegraphics[width=0.8\linewidth]{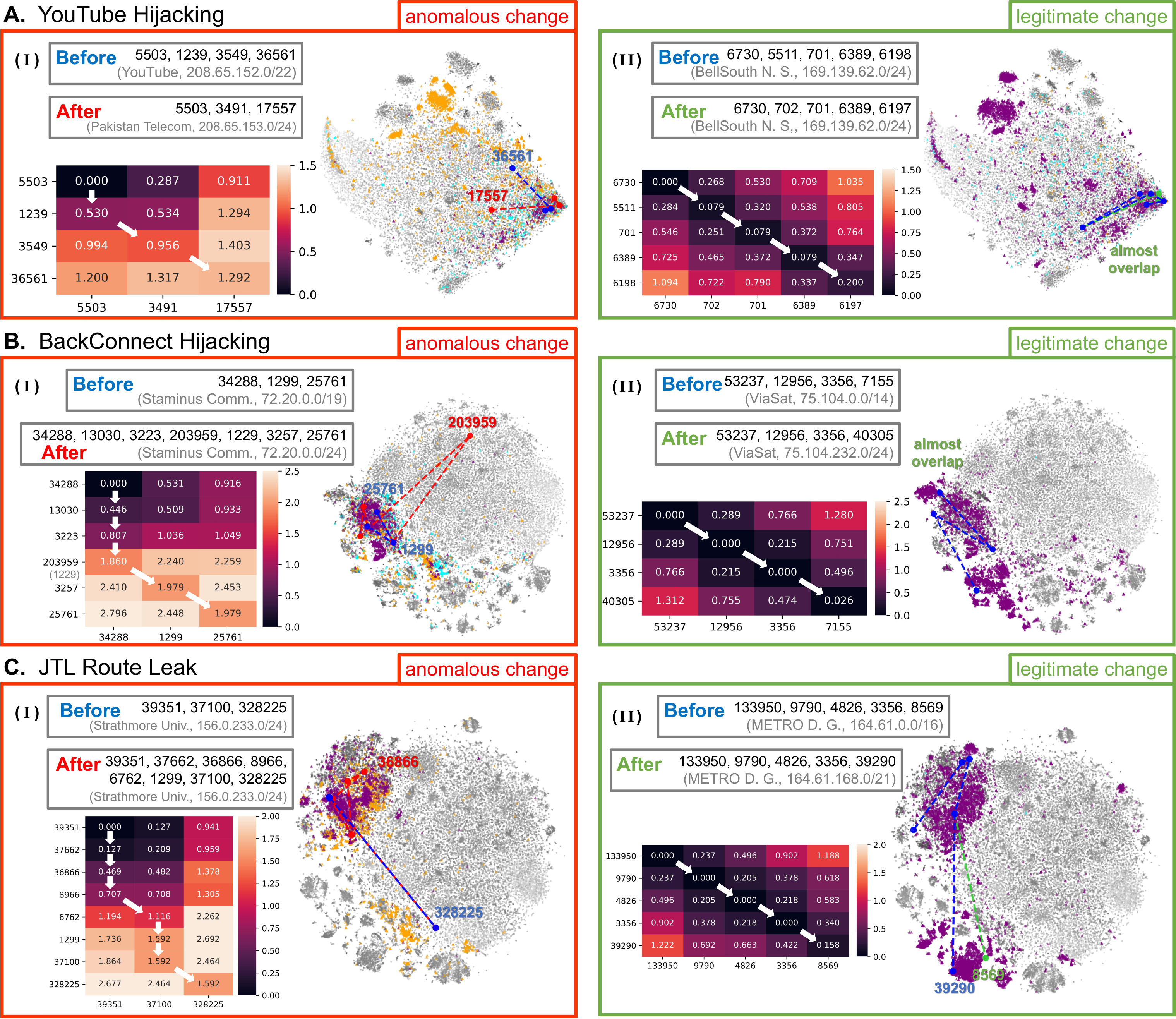}
  \caption{\textbf{
  Typical anomalous/legitimate route changes.
  }
  \textnormal{In each plot, the heat map (left)
  displays the path difference score at each AS pairing step of DTW in a top-to-bottom and left-to-right order.
  The white arrows show the optimal sequence of AS pairs. The embedding map (right) represents embedding vectors in a 2-D plane.
  The purple dots denote the common neighbors of the two routing paths, while the orange and cyan dots correspond to the exclusive neighbors for the new and old paths, respectively.
  }
  }
  \label{fig:heatmap}
\end{figure*}

In this section, we illustrate the interpretability of our detection results by analyzing four detected anomalies: three from historical events (Cases 1-3) and one from our real-world deployment (Case 4). Cases 1-3 each cover a different category of routing anomalies, \ie origin change, path change, and route leak. In each case, we compare a representative anomalous route change with a legitimate one that occurred on the same day.
To interpret the difference between routing paths, 
we apply two visualization techniques: the heat map of the path difference scores and the embedding map of the routing roles.
In particular, the heat map shows 
the path difference scores for the eligible sequences of AS pairs and marks the optimal sequence identified by DTW. 
The embedding map, generated by t-SNE~\cite{van2008visualizing}, 
visualizes the routing roles of the ASes 
to illustrate the deviations between 
two paths.

\noindent\textbf{Case 1.}
\mbox{Figure~\ref{fig:heatmap}(A)} shows the origin change in $SO_{pakistan}$, where
AS 17557 (\emph{Pakistan Telecom}) hijacked a subprefix of AS 36561 (\emph{YouTube}) by announcing 208.65.153.0/24. 
\mbox{Figure~\ref{fig:heatmap}(A)(I)} sees the anomalous pattern
that most ASes on the new path are different from those on the old path.
The two 
origin ASes, AS 17557 and AS 36561, are
far apart
in the embedding map 
with few common neighbors,
indicating very different 
routing roles. 
Thus, their route announcements would traverse quite different paths before   
they eventually converge on AS 5503. This pattern also appears 
in the heat map, where the path difference score 
rises sharply as the AS pairing operation proceeds, because the ASes from two paths are completely different after a few steps.
In contrast, \mbox{Fig.~\ref{fig:heatmap}(A)(II)} shows a legitimate route change on the same day with low path difference scores and almost overlapping paths in the embedding map,
because the legitimate route update changes the origin from AS 6198 to AS 6197,
both operated by \emph{BellSouth Network} and with similar routing roles.

\noindent\textbf{Case 2.}
\mbox{Figure~\ref{fig:heatmap}(B)} shows the path change in $SP_{backcon\_2}$, where
AS 203959 (\emph{BackConnect}) hijacked a subprefix of AS 25761 (\emph{Staminus Comm.}) by faking a nonexistent routing path to the real origin AS. 
The embedding map reveals that the fake path detours significantly from the real one, as AS 203959 is neither on the real path nor similar to any ASes on it in terms of routing roles.
The heat map also indicates an upsurge in the path difference score 
in the AS pairing.
The optimal sequence of AS pairs greatly deviates from the diagonal of the heat map because no AS is similar to AS 203959 in the counterpart.
In contrast, the paths in the legitimate route change are similar in routing roles; we leave the analysis to the reader.

\noindent\textbf{Case 3.}
\mbox{Figure~\ref{fig:heatmap}(C)} shows the route leak in $RL_{JTL}$, where
AS 36866 (\emph{JTL}) received the route to 156.0.233.0/24 from its provider AS 8966 (\emph{Emirates Tel.}) and leaked it to its another provider AS 37662 (\emph{WIOCC}). 
The leaked route results in a much longer path through AS 37662 that detours significantly from the original one. It also violates the valley-free criterion. Accordingly, the heat map shows high path difference scores and the optimal sequence forms a hump-like pattern away from the diagonal.
In contrast, the legitimate route change has minor path difference; we leave the analysis to the reader.

\noindent\textbf{Case 4.}
We represent the first alarm reported by our system during its real-world deployment in Fig.~\ref{fig:anomaly-report}. 
This alarm, labeled Alarm 0, starts at 01:04:40, January 1, 2023, and lasts about 1 hour and 24 minutes, affecting three prefixes and six routes observed by two individual vantage points. 
The top part of Fig.~\ref{fig:anomaly-report} provides an overview of this alarm. 
The bottom part of Fig.~\ref{fig:anomaly-report} further plots
a representative route change event captured by this alarm observed from AS 6453. In this event, AS 42440 (\emph{RDG-AS}) announces 185.88.179.0/24, which is owned by AS 201691 (\emph{WEIDE}), without authorization (its RPKI validation state is \emph{invalid\_ASN}). Moreover, AS 42440 is also on the path before the route change, indicating a Type-5 route leak as described in RFC 7908~\cite{sriram2016problem}. This pattern also appears in the heat map. The optimal sequence before AS 42440 follows the diagonal and the path difference score increases sharply after AS 42440, indicating a significant change of routing roles. The embedding map also demonstrates the clear difference between AS 42440 and AS 201691. 


\begin{figure}[t]
  \centering
    \includegraphics[width=0.9\linewidth]{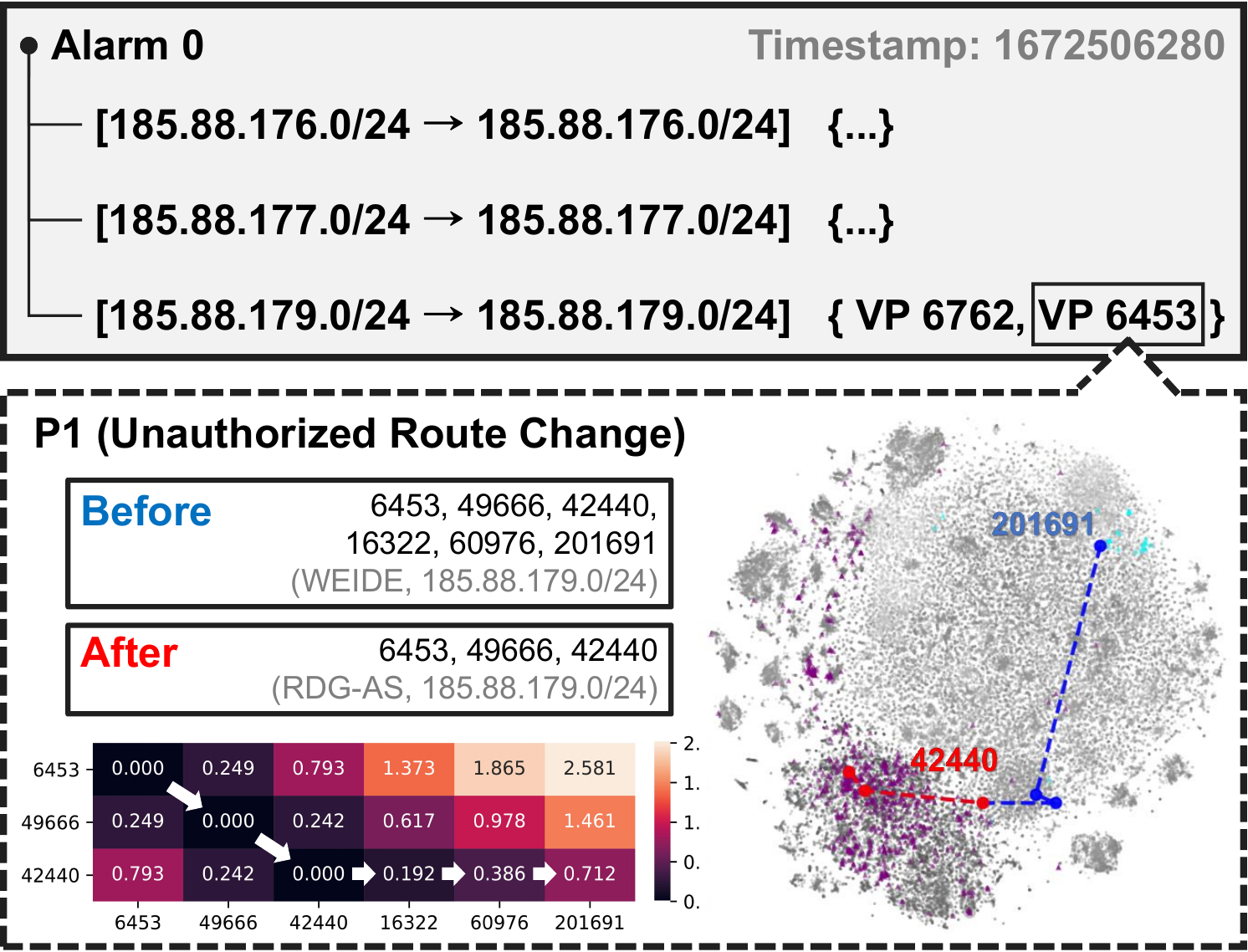}
  \caption{\textbf{An anomaly report in real-world deployment.}}
  \label{fig:anomaly-report}
\end{figure}

\section{Discussion}
\label{sec:discussion}

\update{
\noindent \textbf{Per-Prefix Threshold.} 
Our system uses the same detection threshold values (\ie $th_d$ and $th_v$) for all prefixes. 
Assigning an individual threshold for each prefix may gain better results. But it will incur significant computational overheads and large detection delays due to the large number of prefixes in the Internet (\ie over 940k IPv4 and 200k IPv6 prefixes in November 2023), which is inappropriate for online detection.


}

\update{
\noindent \textbf{Reducing False Alarms.} 
We can apply two heuristics to further reduce our system's false alarms. \first Since most route hijacks and route leaks are short-lived~\cite{vervier2015mind}, we can label the long-lived routes as normal. \second The different origin ASes in a legitimate MOAS event usually have similar routing characteristics~\cite{zhao2001analysis} (\eg from the same organization) while the malicious ASes do not.
We leave the application of both heuristics in our system to future work.

}

\update{
\noindent \textbf{Adaptive Attacks.} 
An adaptive attacker may attempt to bypass our system by mimicking the normal ASes' routing roles, yet it is very difficult in practice. Specifically, to imitate a normal AS's routing role, a malicious AS has to establish business relationships with the normal AS's neighbors, which is time-consuming and would reduce the malicious AS's stealthiness due to the possible scrutiny required for establishing neighboring relationships.  
Further, to make the routing roles of a forged path similar to those of a real path, the attacker has to control multiple ASes on the forged path. Our empirical study (presented in detail in Appendix~\ref{sec:appendix:adaptive-attacks}) reveals that it requires the attacker to control at least two ASes and establish hundreds of new AS relationships to make the routing roles of a forged path similar to those of a real path. This is considerably intractable in practical scenarios. 
}

\noindent \textbf{Evolving Routing Roles.} AS routing roles evolve
as ASes update routing policies.
Our system is resilient against routing role evolution.
Per \S\ref{sec:measurement:analysis-results}, even if the time gap between training and detection is over 30 days (\ie the dataset update cycle), our system still performs well.
Moreover, we can retrain \model with latest AS relationships
to keep up with the routing role evolution. It takes 
\(\sim\)10 hours
to train the model on our platform with
GeForce RTX 2080 Ti, which is acceptable 
since CAIDA releases a new dataset roughly
every month. 

\noindent\textbf{Detection with Unknown ASes.} \model cannot learn routing roles of ASes whose relationships with other ASes are unknown (\ie unknown ASes). Fortunately, the existing study \cite{luckie2013relationships} has revealed AS relationships for most ASes absent in the 
dataset. 
For instance, there are only 368 unknown ASes in the most recent events that we analyze, which is only 0.5040\% of the total ASes. 
Further, our measurement study discovered that some unknown ASes use the AS numbers that are reserved for private use
~\cite{mitchell2013autonomous}. The routing paths containing such unknown ASes should be regarded as anomalous. 


\section{Related Work}
\label{sec:related-work}

\noindent \textbf{Traditional Routing Anomaly Detection.}
The existing studies 
consist of
the control-plane based~\cite{sermpezis2018artemis,yan2009bgpmon,li2005internet}, the data-plane based~\cite{zhang2008ispy,li2012buddyguard,zheng2007light} and the hybrid methods~\cite{schlamp2016heap,shi2012detecting,vervier2015mind,hu2007accurate}.
The control-plane based methods maintain the normal/authoritative route information for each prefix and check if the newly received route contradicts it. The data-plane based methods identify routing anomalies by analyzing the reachability from multiple hosts to the 
target prefix.
However, these traditional approaches require non-trivial manual investigation from network operators, \eg collecting the normal route information of each prefix, and deploying vast
network probes to monitor the prefixes in the world, which incurs unacceptable operation overhead for deployment.

\noindent \textbf{ML Based Routing Anomaly Detection.} 
Machine learning is utilized to detect routing anomalies ~\cite{cheng2016ms,testart2019profiling,cheng2018multi,dong2021isp,al2015detecting,al2012machine,lutu2014separating,deshpande2009online,theodoridis2013novel,shapira2020deep,hoarau2021suitability,shapira2022ap2vec,sanchez2019comparing}.
Shapira \etal~\cite{shapira2022ap2vec} (AV) use unsupervised word embedding to model routes.
\update{But AV fails to detect transient anomalies due to its reliance on RIB snapshots and suffers from non-trivial retraining overheads.}
Dong \etal~\cite{dong2021isp} and Hoarau \etal~\cite{hoarau2021suitability} perform supervised classification on BGP time series. Both require large-scale
labeled datasets, which is hard to achieve in practice. 
Moreover, all these methods cannot provide interpretable
results,
incurring great manual efforts for validation. 

\noindent \textbf{Application of Network Representation Learning.}
Network representation learning (NRL)~\cite{zhang2018network} aims to learn latent, low-dimensional representations of network vertices, while preserving network characteristics, \eg the structure information and vertex content, where the learnt representations can be used for downstream tasks. NRL has been applied in various domains, \eg social network analysis~\cite{qiu2019noise,liu2019structural,wang2019online}, recommendation system~\cite{tan2020learning,ge2020graph,liu2020multi}, and anomaly detection~\cite{ye2019out,zhang2020gcn,fan2021heterogeneous}. 
In this paper, we develop a BGP semantics aware NRL model to measure ASes' routing roles for routing anomaly detection.

\section{Conclusion}

In this paper, we present a routing anomaly detection system centering around a novel network representation learning model named \model. The core design of \model is to accurately learn the routing roles of Internet ASes by incorporating the BGP semantics. As a result, routing anomaly detection, given \model, is reduced to discovering unexpected routing role churns upon observing new route announcements. We implement a prototype of our routing anomaly detection system and extensively evaluate its performance using 18 real-world RouteViews datasets containing over 11 billion route announcement records. The results demonstrate that our system can detect all previously-confirmed routing anomalies 
within an acceptable number of alarms. 
We also perform one month of real-world detection at a large ISP and detect 497 true anomalies in the wild with only 1.65 daily false alarms on average, demonstrating the practical feasibility of our system.

\bibliographystyle{IEEEtran}
\bibliography{ref}

\appendix
\section{Preservation of the Second-Order Proximity}
\label{sec:appendix:proof-of-the-second-order-proximity-preservation}

In practice, we derive the weight vector $\bm{l}$ (see Def.~\ref{def:first-order-pro}) from the softmax transformation of a learnable variable $\bm{l'} \in \mathbb{R}^d$, and ensures that the minimum value of its components is above $\alpha$ $(\alpha > 0)$ to avoid trivial solutions:

\begin{equation}
    \bm{l} = [l_0, \dots , l_{d-1}]^{\intercal} = \big(softmax(\bm{l'})\big)_{\alpha+}. 
\end{equation}

\noindent
Accordingly, $\bm{l}$ is differentiable with respect to $\bm{l'}$ and satisfies the following criteria:
\begin{equation}
    \|\bm{l}\|_1=\sum_{i=0}^{d-1}|l_i|=1,\quad
    l_{min}=\min\{l_0, \dots , l_{d-1}\} \ge \alpha \label{eq:weight}.
\end{equation}

\noindent
In practice, we set $\alpha = \frac{1}{d} \times 10^{-6}$, a rather small positive number.
Now, we prove the theorem that the pairwise AS difference does preserve the characteristics of the second-order proximity between ASes\footnote{Note that this is a theoretical proof under ideal conditions. In the actual training process, the convergence of the embedding vectors would also be affected by initial values and sampling techniques.}. 

\begin{theorem}
    The distance function~\eqref{eq:distance} preserves the second-order proximity. 
    \label{thm:second-order-similarity}
\end{theorem}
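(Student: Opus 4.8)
The plan is to prove the contrapositive-free ``high proximity $\Rightarrow$ small distance'' direction, and to first reduce to the extremal case. Maximal second-order proximity (Def.~\ref{def:second-order-pro}) means $\bm{p_u}=\bm{p_v}$, i.e.\ $u$ and $v$ have identical neighbor sets under identical relationship types; the goal in this case is to show $D_{\bm{l},\bm{r}}(u,v)=0$ exactly, where $D_{\bm{l},\bm{r}}$ is given by \eqref{eq:distance}. The engine of the argument is to pass from \emph{score equality} (the two ASes produce the same $p\_score$ and $h\_score$ against every other vertex) to \emph{embedding equality} ($\bm{x_u}=\bm{x_v}$), at which point $D_{\bm{l},\bm{r}}(u,v)=0$ is immediate from \eqref{eq:distance}.

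The technical core is a linear-algebra step that I expect to go through cleanly. Because the objective \eqref{eq:objective} preserves the first-order proximity, two vertices sharing an identical first-order profile should, under ideal convergence, be driven to the same scores against each remaining vertex $w$, i.e.\ $p\_score(u,w)=p\_score(v,w)$ and $h\_score(u,w)=h\_score(v,w)$ for all $w\neq u,v$. From the hierarchy equalities and the definition of $h\_score$ in \eqref{eq:s2}, I would cancel the common $\bm{x_w}$ term to obtain $\bm{r}^{\intercal}(\bm{x_u}-\bm{x_v})=0$, which already forces $h\_score(u,v)=0$. From the proximity equalities, expanding the weighted square in \eqref{eq:s1} gives $p\_score(u,w)-p\_score(v,w)=-2\bigl(\bm{l}\odot(\bm{x_u}-\bm{x_v})\bigr)^{\intercal}\bm{x_w}+\sum_i l_i\,(x_{u,i}^2-x_{v,i}^2)$, so the vanishing of this difference for every $w$ says precisely that the affine map $w\mapsto\bigl(\bm{l}\odot(\bm{x_u}-\bm{x_v})\bigr)^{\intercal}\bm{x_w}$ is constant. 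Provided the embeddings $\{\bm{x_w}\}$ of the other vertices span $\mathbb{R}^d$ affinely, this yields $\bm{l}\odot(\bm{x_u}-\bm{x_v})=\bm{0}$; and here the construction of $\bm{l}$ is essential: since \eqref{eq:weight} guarantees $l_i\ge\alpha>0$ componentwise, every coordinate of $\bm{x_u}-\bm{x_v}$ must vanish, hence $\bm{x_u}=\bm{x_v}$ and $D_{\bm{l},\bm{r}}(u,v)=0$.

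For the non-extremal regime, where the second-order proximity is high but $\bm{p_u}\neq\bm{p_v}$, I would close the argument by a continuity/perturbation remark: $D_{\bm{l},\bm{r}}$ is a smooth function of the embeddings, and a small discrepancy between $\bm{p_u}$ and $\bm{p_v}$ perturbs the score-equality constraints only slightly, so $\bm{x_u}-\bm{x_v}$ (and thus $D_{\bm{l},\bm{r}}(u,v)$) stays small. The main obstacle is the justification of the first step, namely that identical (respectively nearly identical) first-order profiles really induce identical (respectively nearly identical) scores against all other vertices at the optimum of \eqref{eq:objective}; this is exactly where the ``ideal conditions'' caveat of the accompanying footnote enters, since in practice convergence depends on initialization and sampling. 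A secondary technical assumption to state explicitly is the affine-spanning condition on $\{\bm{x_w}\}$, which is what lets the constancy of the linear functional be upgraded to the vanishing of $\bm{l}\odot(\bm{x_u}-\bm{x_v})$.
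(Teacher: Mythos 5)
Your central step is the one that fails. You want to pass from ``$u$ and $v$ have identical first-order profiles'' to ``$p\_score(u,w)=p\_score(v,w)$ and $h\_score(u,w)=h\_score(v,w)$ for every other $w$ at the optimum,'' and you justify this by appeal to the objective preserving first-order proximity. But the objective \eqref{eq:objective} only enforces score \emph{inequalities} between observed edges and sampled non-edges; it never drives the scores of two symmetric vertices to be \emph{equal}. What the $u\leftrightarrow v$ symmetry of the loss actually gives is that the \emph{set} of minimizers is invariant under swapping $\bm{x_u}$ and $\bm{x_v}$, not that any particular minimizer is a fixed point of that swap: a perfectly optimized, nonconvex model can place $\bm{x_u}\neq\bm{x_v}$ (e.g.\ two distinct points, each at small weighted distance from the shared neighbors), and then your cancellation of $\bm{x_w}$ and the affine-spanning argument never get off the ground. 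You flag this as the ``main obstacle'' yourself, but it is not a technicality absorbed by the ideal-conditions footnote --- it is the load-bearing claim, and the continuity argument for the non-extremal case inherits the same unproven premise.

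The paper's proof sidesteps this entirely by using a common neighbor as a bridge rather than exploiting symmetry. Similar neighborhoods guarantee some $w$ with $(u,w)\in E$ and $(v,w)\in E$; the idealized training guarantee is then simply that $p\_score(u,w)$ and $p\_score(v,w)$ are both at most some small $\epsilon_1$ --- a statement about what the objective explicitly pushes down, not about which minimizer is selected. The floor $l_i\ge\alpha>0$ from \eqref{eq:weight} (whose role you correctly identified) converts these into Euclidean bounds $\|\bm{x_w}-\bm{x_u}\|_2,\,\|\bm{x_w}-\bm{x_v}\|_2\le\sqrt{\epsilon_1/\alpha}$; the triangle inequality then controls $p\_score(u,v)$, and Cauchy--Schwarz gives $|h\_score(u,v)|=|h\_score(u,w)-h\_score(v,w)|\le 2\sqrt{\epsilon_1/\alpha}$, so that $D_{\bm{l},\bm{r}}(u,v)\le 2\epsilon_1+2\sqrt{\epsilon_1/\alpha}$, which vanishes as $\epsilon_1\to 0$. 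Note also that this yields a quantitative bound directly in the ``similar but not identical neighborhoods'' regime, whereas your plan only addresses $\bm{p_u}=\bm{p_v}$ exactly and defers the rest to an unspecified perturbation. To salvage your route you would need either a uniqueness or symmetry property of the minimizer (unavailable here) or to replace score equality with two-sided smallness bounds at common neighbors --- at which point you have reproduced the paper's argument.
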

\begin{proof}
    Given two vertices $u, v$ $(u \neq v)$ having similar neighbors and the corresponding business relationships, there must exist at least one vertex $w$ $(w \neq u, w \neq v)$ such that $(u, w) \in E, (v,w) \in E$.
    Then after sufficient optimization of the objective function, there must exist two thresholds $\epsilon_1,\epsilon_2>0$ such that
    $$0 \le p\_score(u, w), p\_score(v,w) \le \epsilon_1,$$ and 
    $$h\_score(u,w), h\_score(v, w) \ge \epsilon_2.$$
    According to Equation~\eqref{eq:weight}, we have
    \begin{align}
        p\_score(u, w) &\le \epsilon_1 \\
        &\Longrightarrow (\bm{x_w}-\bm{x_u})^{\intercal}((\bm{x_w}-\bm{x_u})\odot \bm{l}) \le \epsilon_1.  \\
        &\Longrightarrow \|\bm{x_w}-\bm{x_u}\|_2 \le \sqrt{\frac{\epsilon_1}{l_{min}}} \le \sqrt{\frac{\epsilon_1}{\alpha}} \label{eq:s1-i-v-norm}. \\
        p\_score(v, w) &\le \epsilon_1 \\
        &\Longrightarrow (\bm{x_w}-\bm{x_v})^{\intercal}((\bm{x_w}-\bm{x_v})\odot \bm{l}) \le \epsilon_1.  \\
        &\Longrightarrow \|\bm{x_w}-\bm{x_v}\|_2 \le \sqrt{\frac{\epsilon_1}{l_{min}}} \le \sqrt{\frac{\epsilon_1}{\alpha}}. \label{eq:s1-j-v-norm}
    \end{align}
    When it comes to $p\_score(u, v)$, the upper bound is determined according to Equation~\eqref{eq:s1}:
    $$p\_score(u, v) \le p\_score(u, w)+p\_score(v,w) \le 2\epsilon_1$$
    When it comes to $h\_score(u, v)$, the upper bound is determined according to Equation~\eqref{eq:s2}, \eqref{eq:s1-i-v-norm} and \eqref{eq:s1-j-v-norm}:
    \begin{align*}
        |h\_score(u, v)| &= |h\_score(u,w) - h\_score(v,w)| \\
                    &\le |h\_score(u,w)| + |h\_score(v,w)| \\
                    &= |(\bm{x_w}-\bm{x_u})^{\intercal}\bm{r}|
                        +|(\bm{x_w}-\bm{x_v})^{\intercal}\bm{r}| \\
                    &\le \|\bm{x_w}-\bm{x_u}\|_2 \cdot \|\bm{r}\|_2 
                        +\|\bm{x_w}-\bm{x_v}\|_2 \cdot \|\bm{r}\|_2 \\
                    &\le 2\sqrt{\frac{\epsilon_1}{\alpha}}. 
    \end{align*}
    Thus, we have $D_{\bm{l},\bm{r}}(u, v)= |p\_score(u, v)|+|h\_score(u, v)| \le 2\epsilon_1+2\sqrt{\frac{\epsilon_1}{\alpha}}$.
    That is, there exists an upper bound for the distance between $u, v$ in terms of the distance function~\eqref{eq:distance}. Moreover, when the objective function is sufficiently optimized, $\epsilon_1$ is a rather small positive number approaching zero and $\alpha$ is a constant, and then the distance $D_{\bm{l},\bm{r}}(u, v)$ should approach zero, which means the two vertices $u, v$ are at a close distance to each other in the low-dimensional vector space.
    Hence the proof completes.
\end{proof}

\section{Sampled AS Pair Datasets}
\label{sec:appendix:as-pair-datasets}

According to \S\ref{sec:method:embedding-results}, we sample AS pairs in three categories and obtain 15 datasets, as presented in Table~\ref{tab:sampling-sets}.

\begin{table}[ht]
\caption{\textbf{Sampled AS pair datasets.} \textnormal{\emph{No rel} is no relationship. \emph{Ngbr JI} refers to the Jaccard index of two neighbor AS sets.}}
\label{tab:sampling-sets}
\small
\centering
\begin{tabu}{X[4,c] X[2,c] X[4,c] X[12,l]}
    \toprule
    \textbf{Feature}&\textbf{Name}&\textbf{\#AS pair}&\textbf{Sampling rule} \\
    \midrule
    \multirow{3}{*}{\shortstack[c]{1st-order \\ proximity}}
        &$S_0$&10,000&P2P, $\textrm{Ngbr JI} \in [0\%,10\%)$\\
        &$H_0$&10,000&P2C, $\textrm{Ngbr JI} \in [0\%,10\%)$\\
        &$N_0$&10,000&no rel, $\textrm{Ngbr JI} \in [0\%,10\%)$\\
    \midrule
    \multirow{5}{*}{\shortstack[c]{2nd-order \\ proximity}}
        &$N_1$&2,000&no rel, $\textrm{Ngbr JI} \in [0\%,20\%)$\\
        &$N_2$&2,000&no rel, $\textrm{Ngbr JI} \in [20\%,40\%)$\\
        &$N_3$&2,000&no rel, $\textrm{Ngbr JI} \in [40\%,60\%)$\\
        &$N_4$&2,000&no rel, $\textrm{Ngbr JI} \in [60\%,80\%)$\\
        &$N_5$&2,000&no rel, $\textrm{Ngbr JI} \ge 80\%$\\
    \midrule
    \multirow{7}{*}{hierarchy}
        &$S_1$&10,000&with a direct P2P\\
        &$H_1$&10,000&with a direct P2C\\
        &$H_2$&10,000&with 2 consecutive P2C\\
        &$H_3$&10,000&with 3 consecutive P2C\\
        &$H_4$&10,000&with 4 consecutive P2C\\
        &$H_5$&10,000&with 5 consecutive P2C\\
        &$H_6$&10,000&with 6 consecutive P2C\\
    \bottomrule
\end{tabu}
\end{table}

\section{Real-World Route Announcement Datasets}
\label{sec:appendix:route-datasets}

We collect 18 real-world route announcement datasets from RouteViews Project~\cite{routeviews}. Each dataset contains all route announements from all RouteViews collectors 12 hours before and after a previously-confirmed routing anomaly. We name each dataset (\eg $PO_{brazil}$) by the category and the abbreviation of the confirmed routing anomaly related with the dataset. Table~\ref{tab:bgp-event-dataset} shows the details of each dataset.

\begin{table*}[!ht]
    \caption{\textbf{Real-world datasets with previously-confirmed routing anomalies.} \textnormal{$PO$, $SO$, $SP$ and $RL$ indicate the categories of anomalies, \ie the prefix origin change, the subprefix origin change, the subprefix path change, and the route leak, respectively.
    \emph{\#Total Ann} denotes the number of announcements in each dataset. \emph{GT Duration}, \emph{\#GT Ann}, and \emph{\#GT VP} denote the duration of the ground-truth anomaly, the number of anomalous announcements in the ground-truth anomaly, and the number of vantage points observing the ground-truth anomaly, respectively.}
    }
    \label{tab:bgp-event-dataset}
    \small
    \centering
    \begin{tabu}{X[4,c] X[3,l] X[5,c] X[3,r] || X[3,r] X[2,r] X[2,r]}
        \toprule
        \textbf{Category}&\textbf{Name}&\textbf{Time Span ($\bm{\pm12h}$)}&\textbf{\#Total Ann}&\textbf{GT Duration}&\textbf{\#GT Ann}&\textbf{\#GT VP} \\
        \midrule
        \multirow{2}{*}{\shortstack[c]{prefix hijacking \\ (origin change)}}
&$PO_{brazil}$&2014-09-10 00:30:00&93,942,111&1h28m39s&127&24\\
&$PO_{sprint}$&2014-09-09 13:45:00&91,338,780&7s&198&74\\
        \midrule
        \multirow{7}{*}{\shortstack[c]{subprefix hijacking \\ (origin change)}}
&$SO_{iran}$&2018-07-30 06:15:00&201,516,610&3h25m42s&587&102\\
&$SO_{bitcanal\_3}$&2018-06-29 13:00:00&141,856,933&47m34s&672&107\\
&$SO_{backcon\_3}$&2016-02-21 10:00:00&279,389,745&4s&1,156&100\\
&$SO_{backcon\_1}$&2015-12-03 22:00:00&186,952,580&16m5s&695&96\\
&$SO_{bitcanal\_2}$&2015-01-23 12:00:00&60,327,936&5m11s&284&34\\
&$SO_{h3s}$&2014-11-14 23:00:00&94,519,947&2s&581&86\\
&$SO_{pakistan}$&2008-02-24 18:00:00&8,155,604&4h57m6s&156&20\\
        \midrule
        \multirow{6}{*}{\shortstack[c]{subprefix hijacking \\ (path change)}}
&$SP_{backcon\_5}$&2016-05-20 21:30:00&147,824,105&1h33m47s&296&61\\
&$SP_{backcon\_4}$&2016-04-16 07:00:00&100,632,725&11m28s&1,032&98\\
&$SP_{backcon\_2}$&2016-02-20 08:30:00&314,047,522&3m10s&825&73\\
&$SP_{bitcanal\_1}$&2015-01-07 12:00:00&57,902,559&12m55s&286&38\\
&$SP_{petersburg}$&2015-01-07 09:00:00&60,016,016&28s&1,000&83\\
&$SP_{defcon}$&2008-08-10 19:30:00&5,701,271&26s&72&21\\
        \midrule
        \multirow{3}{*}{\shortstack[c]{route leak}}
&$RL_{jtl}$&2021-11-21 06:30:00&3,035,292,256&1h14m47s&2,419&146\\
&$RL_{stelkom}$&2021-11-17 23:30:00&3,572,332,877&3m18s&1,888&140\\
&$RL_{itregion}$&2021-11-16 11:30:00&3,409,628,374&31m32s&2,409&144\\
        \bottomrule
    \end{tabu}
\end{table*}

\section{Identifying Legitimate Route Changes}
\label{sec:appendix:legitimate-route-change}

We identify legitimate route changes via a heuristic rule derived from the multi-homing settings~\cite{zhao2001analysis}, which is a common route engineering practice.
We label an origin change as legitimate if it satisfies two conditions: (i) Two different origin ASes (\ie on the changed and the original routing paths, respectively) belong to the same organization; 
(ii) There is no duplicate AS numbers in its routing paths. 
The first condition ensures the legitimacy of this route change since the actual ownership of the corresponding prefix is not changed, and the second one filters out the potential noises caused by routing paths manipulations, \eg AS prepending. 
Note that we use the CAIDA AS organization dataset~\cite{caida_as_organizations}
to obtain the organization name of each AS.
These rules can not identify all legitimate route changes in the Internet. We only utilize them to find out changes with high confidence.


\section{Detection System built upon BEAM}
\label{sec:appendix:detection-system}

\begin{algorithm}[t]
    \caption{Measuring Path Difference Scores}
    \begin{algorithmic}[1]
        \Function{PathDiffScore}{$S$: Array $[1 \dots m]$, $S'$: Array $[1 \dots n]$}
            \State \textbf{external function} $D_{\mathbf{l},\mathbf{r}}$
            \State DIFF $\coloneqq $ Array $[0 \dots m, 0 \dots n]$
            \For{$i \coloneqq 0$ to $m$ }
                \For{$j \coloneqq 0$ to $n$}
                    \State DIFF$[i,j] \coloneqq +\infty$
                \EndFor
            \EndFor
            \State DIFF$[0,0] \coloneqq 0$
            \For{$i \coloneqq 1$ to $m$}
                \For{$j \coloneqq 1$ to $n$}
                    \State diff $\coloneqq D_{\mathbf{l},\mathbf{r}}(S[i], S'[j])$
                    \State{$\begin{aligned}
                        \text{DIFF}[i,j] \coloneqq \text{diff} + \text{minimum}(\text{DIFF}[i-1,j], \\
                            \text{DIFF}[i, j-1], \text{DIFF}[i-1, j-1])
                        \end{aligned}$
                        }
                \EndFor
            \EndFor
            \State \Return $\text{DIFF}[m, n]$
        \EndFunction
    \end{algorithmic}
    \label{alg:as-path-difference}
\end{algorithm}


Our detection system built upon \model consists of three components, \ie the routing monitor, the \model engine, and the anomaly detector. At a high level, the routing monitor detects route changes from the BGP route update announcements. The \model engine utilizes a pre-trained \model model to compute the path difference scores of the route changes. The detailed path difference score computation process is presented in \S\ref{sec:measurement:path-differnce-score} and we supplement its pseudo-code in Algorithm \ref{alg:as-path-difference}. Besides, the anomaly detector is responsible for identifying routing anomalies from the route changes and generating corresponding alarms. Here, we will introduce the details of the routing monitor and anomaly detector.   


\noindent\textbf{The Routing Monitor.} This component collects the BGP route update announcements from global vantage points, to detect route changes. In particular, it maintains a routing table for each vantage point. To reduce the storage space and facilitate detection, the routing table is in a trie structure, where each node represents a unique prefix and its routing path. Also, the prefix of each parent node is the super prefix of its child nodes. When the routing monitor receives a new route update message that announces routing path $l$ to prefix $p$ from a vantage point, it searches $p$ in the corresponding trie structure routing table. If a routing path to prefix $p$ (denoted by $l'$) already exists in the routing table, we compare it with $l$ and find a route change when $l \neq l'$. If $p$ is not in the routing table, we compare $l$ with the routing path to the most specific super prefix of $p$, which can be obtained from the parent node of $p$. After detecting route changes, the routing monitor updates the routing table with the newly received update message.

\noindent\textbf{The Anomaly Detector.} This component aims to detect routing anomalies from the route changes and generate corresponding alarms. Its detection procedure includes three steps: detecting suspicious route changes, identifying anomalous prefixes, and locating responsible ASes.

Formally, we represent each detected route change by $(t, r, p, p', l, l')$, where $t$ is the occurrence time of the route change, $r$ is the AS number of the vantage point that captures the route change, $p$ is the prefix announced in the update announcement that triggers the route change, $p'$ is the prefix in the routing table that conflicts with $p$ (\ie the same as $p$ or the most specific super prefix of $p$), and $l$ and $l'$ are the routing path to $p$ and $p'$, respectively. The computed path difference score between $l$ and $l'$ is denoted as $d_{l,l'}$. If $d_{l,l'}$ is above a threshold $th_d$ computed based on historical legitimate route changes (detailed in \S\ref{sec:detection:results}), the anomaly detector flags the route change as \emph{suspicious}.

It is necessary to prioritize the widespread routing anomalies captured by multiple vantage points, otherwise the transient route changes in the Internet will generate numerous insignificant alarms. Towards this end, the anomaly detector further groups suspicious route changes that share the same $(p,p')$ and denotes each group as a \emph{prefix event}, where each event is associated with a specific $(p,p')$ and sorts the suspicious route changes by their occurrence time. To each event, the anomaly detector utilizes a sliding window (with time interval $w$) to count the number of individual vantage points that observe the suspicious route changes within the window. If the counted number (denoted by $N_r$) is above another threshold $th_v$, the anomaly detector considers the prefix event is caused by a widespread routing anomaly and regards it as anomalous. The formal definition of anomalous prefix event is:

\begin{definition}[Anomalous Prefix Event]
    Given parameters $th_d, w, th_v$, a prefix event of $(\hat{p},\hat{p}') \equiv P = \{(t, r, p, p', l, l') \mid p=\hat{p},p'=\hat{p}',d_{l,l'}>th_d\}$ is anomalous if and only if $\exists A \subseteq P, \forall (t_1, r_1, p_1, p'_1, l_1, l'_1),(t_2, r_2, p_2, p'_2, l_2, l'_2) \in A, |t_1-t_2|<w, N_r = |\{r \mid (t,r,p,p',l,l') \in A\}|>th_v$.
\label{def:anomalous-prefix-event}
\end{definition}

In realistic circumstances, a misbehaved ASes may impact multiple prefixes simultaneously, \eg AS 55410 hijacked more than 30,000 prefixes in April 2021, resulting in several different anomalous prefix events. To provide network administrators with comprehensive information about the impacted prefixes of each routing anomaly, the anomaly detector correlates all anomalous prefix events based on their responsible ASes. Intuitively, the ASes responsible for a suspicious route change (where $l=\{AS_1,AS_2,\dots,AS_n\},l'=\{AS'_1,AS'_2,\dots,AS'_m\}$) are either within the ASes present in $l$ but absent in $l'$ (\ie $l-l'$ in terms of set subtraction and we denote it as the \emph{drop-out AS set}), or within the ASes absent in $l$ but present in $l'$ (\ie $l'-l$ in terms of set subtraction and we denote it as the \emph{pop-up AS set}). For each anomalous prefix event, we compare its all suspicious route changes to find the intersection of their drop-out AS sets and the intersection of their pop-up AS sets, respectively. If any set is not empty, we denote the ASes in two intersection sets as the responsible ASes for the anomalous prefix event. Thus, the responsible ASes for an anomalous prefix event is defined as follows:

\begin{definition}[Responsible AS]
    Given an anomalous prefix event $A$, let $L = \{(l, l') \mid (t,r,p,p',l,l') \in A\}$. If $I_{\text{drop-out}} = \bigcap_{(l,l') \in L}(l-l') \neq \emptyset$ or $I_{\text{pop-up}} = \bigcap_{(l,l') \in L}(l'-l) \neq \emptyset$, the responsible ASes for $A$ are the ASes belonging to $I_{\text{drop-out}} \cup I_{\text{pop-up}}$. Otherwise, there are no responsible ASes for $A$.
\label{def:responsible-as}
\end{definition}

Then, the anomaly detector correlates the anomalous prefix events based on their responsible ASes. Given two different anomalous prefix events, if their time ranges are overlapped and they have common responsible ASes, we consider they are correlated. The formal definition of a correlated anomalous prefix event is:

\begin{definition}[Correlated Anomalous Prefix Event]
    Given two different anomalous prefix events $A$ and $A'$, let $T = \{t \mid (t,r,p,p',l,l') \in A\}, T' = \{t \mid (t,r,p,p',l,l') \in A'\}$, denote the set of responsible ASes for $A$ as $R_A$ and that for $A'$ as $R_{A'}$. $A$ and $A'$ are correlated if and only if $R_{A} \cap R_{A'} \neq \emptyset, \exists t_1, t_2 \in T, t_1 < t_2, t'_1, t'_2 \in T', t'_1 < t'_2, [t_1, t_2]\cap[t'_1, t'_2] \neq \emptyset$.
\label{def:correlated-anomalous-prefix-event}
\end{definition}

Finally, we divide all anomalous prefix events into different sets, where the event in one set only correlates the other events in the same set. The anomaly detector treats each set as an individual routing anomaly and outputs a corresponding alarm that includes both the affected prefixes and the responsible ASes.

\section{Alarms Validation via Pattern Matching}
\label{sec:appendix:classify-detected-routing-anomaly-events}

\update{We discuss the rationale of P1-P4 in \S\ref{sec:detection:results}, which are confirmed by domain experts. Each pattern represents one typical kind of Internet routing anomalies. Specifically, P1 tracks the change of RPKI validity states, which indicates the legitimacy of AS-Prefix bindings~\cite{mohapatra2013bgp}, during a route change to detect the unauthorized prefix announcements caused by misconfiguration or hijacking attacks. P2 tracks the violation of the valley-free criterion during a route change. This criterion describes an inherent attribute of normal routing paths in terms of business relationships~\cite{gao2001inferring}. According to RFC 7908~\cite{sriram2016problem}, the violation of this criterion indicates a route leak. P3 tracks the appearance of suspicious path segments during a route change, including private ASNs and adjacent ASes that do not have business relationship records in the CAIDA dataset. In particular, the private ASNs~\cite{mitchell2013autonomous} only appear in the global routing paths when adversaries tamper with the routing path or misconfigurations happen. Besides, considering CAIDA maintains a relatively comprehensive knowledge base of the business relationships among global ASes, the adjacent ASes that have no relationship records in the CAIDA dataset typically represent nonexistent routing connections and are very likely to be forged by sophisticated attacks, \eg the Type-N hijacking~\cite{sermpezis2018artemis}. Lastly, P4 tracks the conflicts between route announcements and Route Origin Authorization (ROA) objects where the conflicting ASes belong to the same organization. Since the ASes in the same organization usually share the same interests, these conflicts indicate the ROA object misconfigurations.}
\update{We explain how to use these patterns to analyze the alarms.}
For each reported routing anomaly event, we calculate the fraction of suspicious route changes in the event that match each pattern (\ie P1-P4) as the explanatory power (EP) of each pattern. Then, if the EP of any pattern is higher than 50\% \update{(\ie an absolute majority)}, we consider the event an anomaly with high confidence (H.C.), otherwise with low confidence (L.C.). Specifically, we use the RIPEstat web API\footnote{https://stat.ripe.net/ui2013/} to query RPKI validation states and the CAIDA AS-to-organization mapping dataset~\cite{caida_as_organization} to identify the organizations of ASes. We show the number of H.C. and L.C. events detected by our system in each dataset and the average EP of the H.C. events in Table ~\ref{tab:classification-result}.
 \update{We report the L.C. events as false alarms in \S\ref{sec:detection:results}.}


\begin{table}[t]
    \caption{\textbf{Anomalous patterns of the routing anomaly events reported by our detection system.}}
    \label{tab:classification-result}
    \small
    \centering
    \begin{tabu}{X[4,l]|X[1,c]|X[1,r]|X[1,r]|X[1,c]|X[1,r]|X[1,c]|X[2,r]}
        \hline
        \multirow{2}{*}{\textbf{Name}}&\multicolumn{5}{c|}{\textbf{H.C.}}&\multirow{2}{*}{\textbf{L.C.}}&\multirow{2}{*}{\textbf{Avg.EP}}\\\cline{2-6}
        &\textbf{P1}&\textbf{P2}&\textbf{P3}&\textbf{P4}&\textbf{Any}&& \\
        \hline
        $SP_{backcon\_5}$&0&25&30&2&32&2&0.6478 \\
        $SP_{backcon\_4}$&2&20&19&1&21&0&0.8163 \\
        $SP_{backcon\_2}$&1&29&31&3&36&1&0.7628 \\
        $SP_{bitcanal\_1}$&1&10&16&0&16&0&0.7473 \\
        $SP_{petersburg}$&2&18&24&3&24&0&0.8912 \\
        $SP_{defcon}$&3&3&3&1&6&1&0.9624 \\
        $SO_{iran}$&5&21&27&2&29&2&0.8190 \\
        $SO_{bitcanal\_3}$&8&36&38&2&39&1&0.8966 \\
        $SO_{backcon\_3}$&3&24&25&2&30&5&0.7520 \\
        $SO_{backcon\_1}$&0&15&12&1&15&3&0.8691 \\
        $SO_{bitcanal\_2}$&4&19&24&2&24&0&0.8798 \\
        $SO_{h3s}$&1&13&12&1&14&0&0.8444 \\
        $SO_{pakistan}$&2&8&5&0&9&1&0.7058 \\
        $PO_{brazil}$&4&34&50&2&50&1&0.5878 \\
        $PO_{sprint}$&1&26&28&3&29&0&0.9262 \\
        $RL_{jtl}$&1&30&34&0&41&5&0.7072 \\
        $RL_{stelkom}$&3&32&37&0&40&3&0.7644 \\
        $RL_{itregion}$&5&33&38&1&40&4&0.8187 \\
        \hline
    \end{tabu}
\end{table}

Given that some route change events occurred long ago, before their corresponding prefixes had registered ROA objects, we use the most recent RPKI validation results to identify the RPKI-related patterns P1 and P4. Further, to reduce the negative impact of the inconsistency between historical and current ROA objects, we double-check each event matching P1 and P4 based on the RIPEstat database and only regard those with unchanged authorized AS-prefix bindings as H.C.


\section{Comparison with ML-based Methods}
\label{sec:appendix:comparison}

We implement two baselines, \ie AV and LS, according to their papers~\cite{shapira2022ap2vec,dong2021isp}. Here, we describe their experimental setup and compare them to our system in detail. 

\noindent
\textbf{AV Setup.}
In accordance with the original setup in the paper~\cite{shapira2022ap2vec}, for each dataset, we train an AV detection model based on the most recent RIB snapshots collected from all vantage points before the confirmed anomaly occurs, and perform detection on the next RIB snapshots in the next two hours. \update{To reduce the training time, the model is trained offline on all two-hour RIB snapshots in parallel. Note that this strategy is not realizable for an AV model deployed for real-world online detection task, because the model has to be trained on the RIB snapshot collected 2 hours earlier and perform detection on the latest RIB snapshot.}
We set the hyperparameters 
the same to the original paper. The outputs of the AV detection model are anomalous route changes. For fair comparisons, we further apply our anomaly detector (see \S\ref{sec:detection:procedure}) to aggregate these route changes into alarms, \ie identifying anomalous prefixes and locating responsible ASes.

\noindent
\textbf{LS Setup.}
The LS method aggregates all BGP announcements in each two minutes into a time interval and extracts 86 features. Note that LS is a supervised detection method. \update{Thus, for each dataset, we use the other 17 datasets as the training data to train an LS model, and then classify the time intervals (\ie normal or anomalous) in this dataset.} The labeling method of the training data and the parameters of the LS detection model are the same to the original paper. All route changes in the detected anomalous time intervals are considered anomalous. Finally, we also utilize our anomaly detector to aggregate these anomalous route changes into alarms.

\noindent
\textbf{The Comparison.}
\update{We compare our system with AV and LS in Table~\ref{tab:comparison-baseline}. Specifically, since AV performs detection in the RIB snapshots of every two hours, it cannot detect transient routing anomalies that occur within the period between two RIB snapshots. Besides, AV requires frequent model updates because tens of millions of RIB entries are updated bihourly, which incurs significant retraining overheads. To the best of our knowledge, there is no technique that can be directly applied to accelerate the training of AV. By contrast, our system directly checks every route change collected from each real-time BGP announcement such that it can detect both transient and long-lived routing anomalies. Further, our system only requires the AS relationship data for training, which is much smaller than the RIB snapshots and more stable, ensuring a small retraining overhead.}



LS characterizes announcements as time series and applies a supervised neural network to detect if each data point in the time series is associated with routing anomalies. 
However, each data point in the time series, 
\eg the data in a two-minute time interval, 
may include a large number of legitimate announcements, leading to much more false alarms than our system.
\update{Besides, since LS utilizes a supervised model, it requires a large amount of BGP anomaly data for training, \eg more than 2,000 BGP anomaly events~\cite{dong2021isp}, which is difficult to achieve in practice. By contrast, our system performs detection in an unsupervised manner and gets rid of the reliance on anomalous training data.}


\begin{table*}[ht]
    \caption{\textbf{Comparison with ML-based methods.} \emph{Vol.}, \emph{Req.} and \emph{Ann.} are short for Volume, Requirement and BGP Announcement, respectively.}
    \label{tab:comparison-baseline}
    \footnotesize
    \centering
    \begin{tabu}{X[3,c] X[0.5,c] X[10,c] X[10,c] X[10,c] X[10,c] X[0.5,c] X[10,c] X[10,c]}
    \toprule
    \multirow{2}{*}{\textbf{Method}}&&\multicolumn{4}{c}{\textbf{Training}}&&\multicolumn{2}{c}{\textbf{Detection}}\\
    \cmidrule{3-6}
    \cmidrule{8-9}
    &&\textbf{Training Data}&\textbf{Training Data Vol.}&\textbf{Retraining Req.}&\textbf{ML Model Type}&&\textbf{Detection Data}&\textbf{Anomaly Type}\\
    \midrule
    Ours && AS relationships & \(\sim\)500K & Monthly & Unsupervised && Ann & Short \& Long-lived \\
    AV~\cite{shapira2022ap2vec} && RIB entries & \(\sim\)10M & Bihourly & Unsupervised && RIB entries & Long-lived \\
    LS~\cite{dong2021isp} && Ann. time series & \(\sim\)100M & Unknown & Supervised && Ann. time series & Short \& Long-lived \\
    \bottomrule
    \end{tabu}
\end{table*}


\section{An Empirical Study on Adaptive Attacks}
\label{sec:appendix:adaptive-attacks}

\update{
We empirically study how many ASes an adaptive attacker needs to control and how many AS relationships need to be established to make the routing roles of a malicious path\footnote{A malicious path refers to a routing path that contains the known malicious AS in the BGP anomaly event.} similar to those of a legitimate path. Specifically, for each legitimate routing path from the 18 real-world datasets (described in \S\ref{subsec:path_diff_score_results}), we start from the vantage point of the legitimate path and search for the malicious path that has the most overlapping segments with the legitimate path.
Then, we choose the ASes that appear on the malicious path but not on the legitimate path. These ASes have to be controlled by the attacker to mimic the legitimate path ASes' routing roles. Next, we compare the neighbors of the malicious and the legitimate path and calculate the minimum number of relationships that need to be established to make 10\% neighbors the same. In this study, we assume that having 10\% common neighbors is sufficient to make the routing roles of a malicious path similar to those of the legitimate path. This reduces the difficulty for the attacker, such that the result is a lower-bound estimate.


We show the result in Fig.~\ref{fig:adaptive-attack}, where the top and bottom figures display the average number of ASes that must be controlled and the average number of AS relationships that need to be established, respectively. 
The error bar shows the 95\% confidence interval. We can see that, in most cases, an adaptive attacker has to control at least two ASes and establish hundreds of new AS relationships to make the routing roles of a malicious path and a legitimate path similar. Note that, the cost for an adaptive attacker to bypass our system in practice would be much higher than the above result. This is because our system monitors multiple vantage points simultaneously, \ie the attacker has to make the routing roles of multiple malicious routing paths similar to those of normal paths. Thus, it is very difficult for adaptive attackers to bypass our system's detection.}

\begin{figure}[ht]
  \centering
    \includegraphics[width=0.9\linewidth]{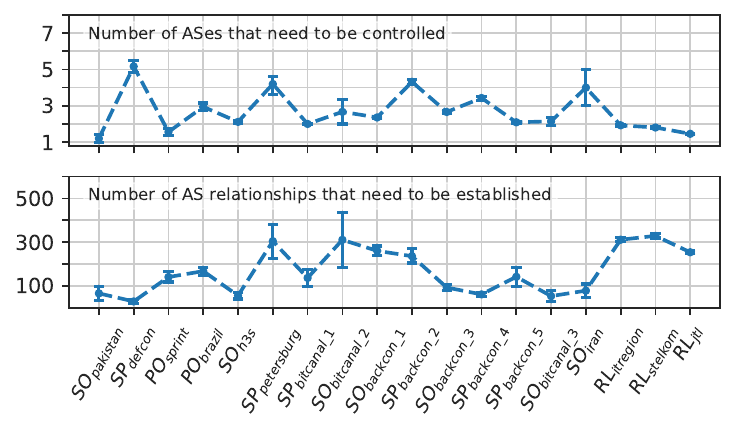}
  \caption{\textbf{\update{The empirical estimate of the cost for an adaptive attacker to make the routing roles of a malicious path and a legitimate path similar.}}}
  \label{fig:adaptive-attack}
\end{figure}

\end{document}